\documentclass[UTF8,13pt]{extarticle}
\usepackage{geometry}
\RequirePackage{amsthm,amsmath,amssymb}
\usepackage{algorithm}
\usepackage{algorithmic}
\usepackage{ulem}
\RequirePackage[colorlinks,citecolor=blue,linkcolor=blue,urlcolor=blue,pagebackref]{hyperref}
\usepackage{xr}
\usepackage{comment}
\usepackage{subcaption} 
\usepackage{graphicx} 
\usepackage{float}      
\usepackage{authblk}

\theoremstyle{plain}
\newtheorem{axiom}{Axiom}
\newtheorem{theorem}{Theorem}
\newtheorem{claim}{Claim}
\newtheorem{assumption}[claim]{Assumption}

\newtheorem{proposition}[axiom]{Proposition}

\theoremstyle{definition}

\newtheorem{example}{Example}
\newtheorem{remark}{Remark}

\def\argmin{\mathop \text{\rm{argmin}}}

\author[1]{Kun Ren}
\author[1]{Wen Su}
\author[2]{Li Liu}
\author[3]{Ian W. McKeague}
\author[4]{Xingqiu Zhao}

\affil[1]{Department of Biostatistics, City University of Hong Kong}
\affil[2]{School of Mathematics and Statistics, Wuhan University}
\affil[3]{Department of Biostatistics, Columbia University}
\affil[4]{Department of Applied Mathematics, The Hong Kong Polytechnic University}

\date{}
\begin{document}

\title{Flexible semiparametric modeling with application to Causal Inference}
\maketitle  

\begin{abstract}
This paper proposes a flexible new framework for constructing Neyman-orthogonal scores in semiparametric models involving infinite-dimensional nuisance parameters. 
While locally robust estimation is vital for integrating machine learning into econometrics, deriving orthogonal scores for complex models remains a major challenge. 
We provide explicit construction strategies for broad classes of settings.
The proposed framework ensures asymptotic normality of target parameter estimators in a way that does not depend on the method used to construct the nuisance parameter estimators, provided they are $o_p(n^{-1/4})$-consistent. 
We apply the proposed methodology to causal inference with a binary instrumental variable, developing a novel, robust estimator for treatment effects. 
Numerical studies demonstrate that our approach significantly outperforms naive alternatives in finite samples. 
An empirical application to the Oregon Health Insurance Experiment illustrates the framework’s utility in providing robust causal evidence.
\end{abstract}

\section{Introduction}\label{intro}
In semiparametric modeling, Neyman-orthogonal scores allow the construction of locally robust estimators of finite-dimensional target parameters, even in the presence of complex infinite-dimensional nuisance components. Such nuisance components can be approximated using machine learning methods, but typically a bias correction step is needed for the target parameter estimator (\cite{10.1111/ectj.12097}).  This can sometime be done by constructing  scores of the target parameters that are Neyman-orthogonal, in the sense that they have zero derivative with respect to the nuisance components, combined with a cross-fitting technique.
Asymptotic normality of target estimators constructed from empirical orthogonal scores can then be established provided the nuisance estimators converge at a rate of $o_p(n^{-1/4})$, see
\cite{10.1111/ectj.12097}, \cite{chernozhukov2022locally}, \cite{chernozhukov2022debiased}, \cite{chernozhukov2022automatic}, and \cite{chernozhukov2023simple}.

The present paper studies the question of how  to systematically derive Neyman-orthogonal scores.
Although many methods for constructing such scores have been developed for diverse situations, including direct computation when nuisance parameters are finite-dimensional (\cite{10.1111/ectj.12097}), application of the Riesz representation theorem (\cite{chernozhukov2022locally},\cite{chernozhukov2022debiased},\cite{chernozhukov2022automatic}), adjustment of original scores using first-step influence functions (FSIFs) (\cite{10.1111/ectj.12097}, \cite{chernozhukov2022locally}), and others, these developments are limited to special statistical models where target parameters are linear functionals of regression functions, or focus on well-investigated examples. 
Providing explanations from the perspective of Neyman orthogonality, as well as deriving the orthogonal scores, however, is often a significant hurdle in applications (\cite{farrell2020deep}). 
These limitations substantially hinder the widespread adoption of debiased machine learning.

Our main contributions are twofold, as explained in the following paragraphs.
First, we provide systematic and easy-to-implement strategies for constructing Neyman-orthogonal scores from the original scores, across a broad class of semiparametric models that may involve multi-stage estimation procedures and multiple infinite-dimensional nuisance parameters.
Then for estimators constructed by minimizing empirical risks,
we derive influence functions for the target parameters via asymptotic normality theory for functional $M$-estimators.  Throughout we make minimal assumptions on model structure, thereby naturally demonstrating that the resulting influence functions can serve as Neyman-orthogonal scores.

To specify our orthogonal-score construction strategy, the relationship between target and nuisance parameters will be studied 
separately for the following two cases:
\begin{itemize}
\item[]\textbf{Case I}. All unknown parameters are characterized by simultaneously minimizing a specific criterion function (subsection~\ref{S1});
\item[]\textbf{Case II}. Target and nuisance parameters require sequential identification (subsections~\ref{S2} and \ref{SI1}).
\end{itemize}

In  Case I,  functional central limit theory for working $M$-estimators is used to reveal how nuisance estimators influence the limiting distribution of estimators of the target parameters. This influence is determined by the Gateaux derivative of the criterion function with respect to the nuisance function in a given direction. 
We then demonstrate that selecting specific directions and constructing corresponding influence functions that eliminate the first-order impact of nuisance estimators is equivalent to deriving Neyman-orthogonal scores.
In Case II with a single nuisance function, we develop a flexible approach to derive closed form expressions for FSIFs, facilitating the construction of robust estimators based on Neyman orthogonality (subsection~\ref{S2}).
Importantly, we extend construction to settings involving multiple nuisance functions (subsection~\ref{SI1}), yielding results that are novel compared to the existing literature.
Moreover, we clarify the role of additional infinite-dimensional nuisance parameters in the orthogonal-score construction, provide their closed-form expressions, and establish a theoretical guarantee that no efficiency loss occurs. 

Our second contribution is the application of the proposed methodology to instrumental variable (IV) models within a potential outcomes causal inference framework.
We propose a robust estimation and inference method for the causal effect among the ``compliers'', as defined by \cite{angrist1996identification}. 
The proposed estimator is motivated by the identification results of \cite{abadie2003semiparametric}, which rely on instrumental propensity score weighting. 
By establishing Neyman orthogonality to augment these weighting methods, we construct a novel score function for the causal parameter of interest.
We demonstrate that the cross-fitted estimator of the causal effect exhibits asymptotic normality and robustness to misspecification of nuisance functions.
Simulation studies demonstrate that the finite-sample performance of our proposed estimator outperforms existing alternatives.
To the best of our knowledge, the integration of Neyman-orthogonality with nonparametric multi-stage estimation is novel in the causal inference literature, providing a flexible and generalizable framework for locally robust estimation and inference in IV settings.

The remainder of the article is organized as follows.
Section~\ref{SI} presents insightful and systematic methods for constructing Neyman-orthogonal scores in various scenarios, illustrated with concrete examples. 
Section~\ref{Sec4} applies the framework to an observational study with a valid binary instrument, developing a robust approach to estimate and infer population-level treatment effects.
Section~\ref{simulation} shows comprehensive empirical results in simulation studies.
Section~\ref{Sec5} presents an empirical application to the Oregon Health Insurance Experiment (OHIE) dataset.
Section~\ref{Sec6} concludes with a summary of findings and implications for future research.
The Appendix provides all technical proofs of results in Sections \ref{SI} and \ref{Sec4}.

\section{Locally robust semiparametric estimation and inference}\label{SI}
This section develops a novel, model-free approach for constructing Neyman-orthogonal scores.
While there are several influential studies along these lines, such as \cite{10.1111/ectj.12097, chernozhukov2022locally}, our primary aim here  is to demonstrate how asymptotic theory for functional $M$-estimators provides a unified approach that simplifies and systematizes the construction of such scores.

We begin with the setup.
Let the observed data $S=\{W_i\}_{i=1}^{n}$ consist of i.i.d.\ copies of the random vector $W$ with probability law $P$.
Define the population and empirical measures for any measurable function $g$ as $\mathbb Pg(W)=\int g(w)dP(w)$ and $\mathbb{P}_ng(W)=n^{-1}\sum_{i=1}^{n}g(W_{i})$.

Let $X \in \mathcal{X}$ be a sub-vector of the random vector $W$.
The parameter of interest is a finite-dimensional vector $\beta_0$, which is assumed to be a scalar for convenience.
It can be identified by a score condition involving an infinite-dimensional nuisance parameter $f_0 \in \mathcal{F}$:
\begin{equation} \label{naive_moment}
\mathbb{P} \psi(\beta_0, f_0; W)  = 0,
\end{equation}
where $\psi$ is a known score function, and the nuisance parameter space $\mathcal{F}$ is a normed space consisting of real-valued functions mapping from $\mathcal{X}$ to $\mathbb{R}$.

Throughout this paper, we need a regularity condition that permits the interchange of differentiation and expectation.
That is, to derive the Gateaux derivative of a real-valued functional $(\beta, f)\mapsto \mathbb{P} \psi(\beta, f; W)$ (\cite{van2000asymptotic}), we assume that the function $\psi$ is continuously differentiable with respect to its arguments.
The first partial derivative of $\mathbb{P} \psi(\beta, f; W)$ with respect to $f$ at $(\beta_0,f_0)$ in a direction $h \in  \mathcal{F}$ is 
\begin{align*}
& \partial_f \mathbb{P}  \psi(\beta_0,f_0;W)[h] 
:=  \lim_{\delta\rightarrow 0} \mathbb{P} \frac{\psi(\beta_0,f_0 + \delta h;W) - \psi(\beta_0,f_0;W)}{\delta} \\
= &\mathbb{P}  \lim_{\delta\rightarrow 0} \frac{\psi(\beta_0,f_0 + \delta h;W) - \psi(\beta_0,f_0;W)}{\delta}=  \mathbb{P}   \{\partial_f \psi(\beta_0,f_0;W)\cdot h\},
\end{align*}
where the second equality follows from our regularity assumption, and $\partial_{f} \psi(\beta_0, f_0; w)$ denotes the first-order partial derivative of $\psi$ with respect to its second argument at $(\beta_0, f_0)$.
Furthermore, we use $\partial^2_{\beta f} \psi(\beta_0, f_0; w)$ to denote the second-order mixed partial derivative of the function $\psi$ with respect to its first and second arguments at $(\beta_0, f_0)$, and other partial derivatives are defined analogously.

We provide the following example to illustrate the above notation. 
Suppose that the observed data is $W = (X, D,Y)$ consisting of a $p$-dimensional covariate vector $X$, a treatment indicator $D$ and the response variable $Y$.
Under the potential outcome framework (\cite{rubin1974estimating}), the target parameter is the mean potential outcome under treatment $\beta_0:=E Y(1) $, and the nuisance function is $f_0 := \log[g_0 / (1 - g_0)]$, where $g_0(X) =P(D=1 \mid X )$ denotes the propensity score.
The inverse probability weighting method proposed  by \cite{rosenbaum1983central} is based on the identity:
$\beta_0 - \mathbb{P}  [DY (1+e^{-f_0(X)}) ] = 0.$
In this case, for a realization $w=(x,d,y)$ of the random vector $W$, the score function of $\beta_0$ is $\psi(\beta, f; w) = \beta - yd(1+e^{-f(x)})$. 
Then, its partial derivatives are $\partial_{\beta} \psi(\beta, f; w) = 1$, $\partial_f \psi(\beta, f; w) = yd e^{-f(x)}$, $\partial_{f \beta}^2 \psi(\beta, f; w) =0$ and $\partial_{ff}^2 \psi(\beta, f; w) = -yde^{-f(x)}$.

A score function for $\beta_0$, $\psi^*(\beta,f;w)$ is Neyman-orthogonal with respect to the nuisance function $f_0$ if it satisfies $\mathbb{P}  \psi^*(\beta_0,f_0;W)=0$, $\mathbb{P}  |\psi^*(\beta_0,f_0;W)|^2 < \infty$ and for any direction $f\in\mathcal{F}$, it obeys the Neyman-orthogonal condition at $(\beta_0,f_0)$:
$$\partial_{f} \mathbb{P} \psi^*(\beta_0,f_0;W)[f-f_0] = 0.\nonumber$$
This orthogonality condition implies that the score function $\psi^*$ is locally insensitive to perturbations of the nuisance function $f$ around its true value $f_0$.
Therefore, the score function $\psi^*$ is considered locally robust.
Based on the original score condition \eqref{naive_moment}, we propose several systematic methods for constructing such orthogonal scores that apply in various semiparametric scenarios.

\subsection{Semiparametric model with a single nuisance function}\label{SI2}
\subsubsection{Fully coupled semiparametric model} \label{S1}
Consider a setting, in which the parameter of interest, $\beta_0 \in \mathbb{R}$, and an infinite-dimensional nuisance function, $f_0 \in  \mathcal{F}$, are jointly identified as the solution to an $M$-estimation problem.
Specifically, the true parameters $(\beta_0, f_0)$ satisfy
\begin{equation}\label{M}
(\beta_0, f_0) = \underset{(\beta, f) \in \mathbb{R} \times  \mathcal{F}}{\argmin} \mathbb{P} m( \beta, f;W),
\end{equation}
where $m$ is a known criterion function.
This formulation implies that the identification of $\beta_0$ is intrinsically linked to the correct specification of $f_0$.
Such joint $M$-estimation problems are common in econometrics (\cite{newey1999nonparametric}), panel count data analysis (\cite{10.1214/009053607000000181}) and survival analysis (\cite{liu2022nonparametric, zhong2022deep}). 
We now derive the Neyman-orthogonal score for $\beta_0$.

First, we consider a working $M$-estimator, defined as
\begin{equation*}
(\hat{\beta}_n^M, \hat{f}_n^M) = \argmin_{(\beta, f) \in \mathbb{R} \times  \mathcal{F}_n} \mathbb{P}_n m(\beta, f; W),
\end{equation*}
where $ \mathcal{F}_n \subseteq  \mathcal{F}$ is a sieve space used to approximate the nuisance function.
The key insight is that the influence function of $\hat{\beta}_n^M$ can serve as a score function of $\beta_0$, which obeys the Neyman-orthogonal condition.  

To derive this influence function, we provisionally assume that the estimator $(\hat{\beta}_n^M, \hat{f}_n^M)$ exhibits asymptotic functional normality.
The asymptotic normality of this working estimator serves solely as a heuristic tool to guide the construction of the orthogonal score for $\beta_0$. 
The orthogonality of the influence function does not depend on the assumptions imposed on $(\hat{\beta}_n^M, \hat{f}_n^M)$, and the existence of such a working estimator is not required.
We make the following assumptions to derive the asymptotic expansion of $\hat{\beta}_n^M$. 

\begin{assumption}\label{as_pre}
For any direction $h \in \mathcal{F}$, we impose the following conditions:
\begin{itemize}
\item[(a)] (Stochastic equicontinuity)
\begin{align*}
&\sqrt{n}(\mathbb{P}_n-\mathbb{P})\{ \partial_{\beta} m(\hat{\beta}_n^M, \hat{f}_n^M; W) + \partial_{f} m(\hat{\beta}_n^M, \hat{f}_n^M; W)h \} \\
&- \sqrt{n}(\mathbb{P}_n-\mathbb{P}) \left\{ \partial_{\beta} m(\beta_0, f_0; W) + \partial_{f} m(\beta_0, f_0; W)h \right\} = o_p(1).
\end{align*} 

\item[(b)] {(Score condition)}
\begin{enumerate}
\item[(i)] $\mathbb{P} \{ \partial_{\beta} m(\beta_0, f_0; W) + \partial_{f} m(\beta_0, f_0; W) h  \} = 0$.
\item[(ii)] $\mathbb{P}_n \{ \partial_{\beta} m(\hat{\beta}_n^M, \hat{f}_n^M; W) + \partial_{f} m(\hat{\beta}_n^M, \hat{f}_n^M; W) h  \} = o_p(n^{-1/2})$.
\end{enumerate}

\item[(c)] (Smoothness)
The map $(\beta, f) \mapsto \mathbb{P} \{ \partial_{\beta} m(\beta, f; W) + \partial_{f} m(\beta, f; W)h \}$ is Fréchet-differentiable at $(\beta_0, f_0)$.

\item[(d)] (Second-order condition)
\begin{align*}
&\mathbb{P} \{ \partial_{\beta} m(\hat{\beta}_n^M, \hat{f}_n^M; W) + \partial_{f} m(\hat{\beta}_n^M, \hat{f}_n^M; W) h  - \partial_{\beta} m(\beta_0, f_0; W) - \partial_{f} m(\beta_0, f_0; W) h  \} \\
& - \mathbb{P}[ \{ \partial^2_{\beta \beta} m(\beta_0, f_0; W) + \partial^2_{f \beta} m(\beta_0, f_0; W) h  \} (\hat{\beta}_n^M - \beta_0) ] \\
& - \mathbb{P} \{ \partial^2_{\beta f} m(\beta_0, f_0; W) (\hat{f}^M_n-f_0) + \partial^2_{f f} m(\beta_0, f_0; W) h (\hat{f}^M_n-f_0) \}   = o_p(n^{-1/2}).
\end{align*}

\end{itemize}
\end{assumption}

\begin{remark}[Assumption \ref{as_pre}]
Assumption \ref{as_pre}(a) amounts to asymptotic stochastic equicontinuity of the empirical process term. 
This condition depends on the entropy of the function classes containing $(\hat{\beta}_n^M,\hat{f}_n^M)$ and $(\beta_0,f_0)$ (\cite{vaart1996weak, https://doi.org/10.1111/1468-0262.00461}).

Assumption \ref{as_pre}b(i) {represents a score condition derived from the $M$-estimation problem \eqref{M}. }
Assumption \ref{as_pre}b(ii) is satisfied if the empirical score is sufficiently close to zero.

Assumption \ref{as_pre}(c) requires sufficient smoothness of the objective function.

Assumption \ref{as_pre}(d) dominates the remainder term from the Taylor expansion, combining with (c)
The left-hand side of the equation in (d) has the order of $|\hat{\beta}_n^M-\beta_0|^2 + \mathbb{P}|\hat{f}_n^M(X) - f_0(X)|^2$. 
Therefore, (d) holds if $\hat{\beta}_n^M$ and $\hat{f}_n^M$ converge at rates faster than $o_p(n^{-1/4})$.

\end{remark}

\begin{proposition}\label{prop0}
Under Assumption \ref{as_pre}, the functional of the $M$-estimator, $(\hat{\beta}_n^M, \hat{f}_n^M)$ with any direction $h\in\mathcal{F}$ has the asymptotic expansion:
\begin{align} \label{co}
&-\sqrt{n} \int\{\partial^2_{\beta \beta} m(\beta_0, f_0;w) + \partial^2_{f \beta} m(\beta_0, f_0;w)h \} (\hat{\beta}_n^M - \beta_0) dP(w) \nonumber\\
& -\sqrt{n} \int [\partial^2_{\beta f} m(\beta_0, f_0;w) \{\hat{f}^M_n(x)-f_0(x)\}  \nonumber\\
& \hspace{17mm}+ \partial^2_{f f} m(\beta_0, f_0;w)h(x) \{\hat{f}^M_n(x)-f_0(x)\} ]  dP(w) \nonumber\\
=&\sqrt{n}(\mathbb{P}_n-\mathbb{P})\{\partial_{\beta} m(\beta_0, f_0;W) + \partial_{f} m(\beta_0, f_0;W)h(X) \} + o_p(1).
\end{align}
\end{proposition}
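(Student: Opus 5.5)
Fix a direction $h\in\mathcal{F}$ and write $\phi_h(\beta,f;w):=\partial_{\beta} m(\beta,f;w)+\partial_{f} m(\beta,f;w)h(x)$. The plan is the standard $Z$-estimator expansion: an algebraic decomposition of $\mathbb{P}\phi_h(\hat{\beta}_n^M,\hat{f}_n^M;W)$, followed by a linearization supplied by Assumption \ref{as_pre}.

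The starting point is the identity
\begin{align*}
\sqrt{n}\,\mathbb{P}_n\phi_h(\hat{\beta}_n^M,\hat{f}_n^M;W)
={}&\sqrt{n}(\mathbb{P}_n-\mathbb{P})\phi_h(\hat{\beta}_n^M,\hat{f}_n^M;W)\\
&+\sqrt{n}\,\mathbb{P}\{\phi_h(\hat{\beta}_n^M,\hat{f}_n^M;W)-\phi_h(\beta_0,f_0;W)\}\\
&+\sqrt{n}\,\mathbb{P}\phi_h(\beta_0,f_0;W).
\end{align*}
Each term is handled as follows.
\begin{itemize}
\item By Assumption \ref{as_pre}b(ii), the left-hand side is $o_p(1)$.
\item By Assumption \ref{as_pre}b(i), the last term on the right vanishes.
\item By Assumption \ref{as_pre}(a), the empirical-process term equals $\sqrt{n}(\mathbb{P}_n-\mathbb{P})\phi_h(\beta_0,f_0;W)+o_p(1)$.
\end{itemize}
Rearranging then gives
\[
-\sqrt{n}\,\mathbb{P}\{\phi_h(\hat{\beta}_n^M,\hat{f}_n^M;W)-\phi_h(\beta_0,f_0;W)\}=\sqrt{n}(\mathbb{P}_n-\mathbb{P})\phi_h(\beta_0,f_0;W)+o_p(1).
\]

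The remaining step is to linearize the drift term $\mathbb{P}\{\phi_h(\hat{\beta}_n^M,\hat{f}_n^M;W)-\phi_h(\beta_0,f_0;W)\}$. By Assumption \ref{as_pre}(c), the map $(\beta,f)\mapsto\mathbb{P}\phi_h(\beta,f;W)$ has a Fréchet derivative at $(\beta_0,f_0)$. This derivative coincides with the Gateaux derivative, and the standing regularity assumption lets us interchange differentiation and integration. Differentiating $\phi_h$ in $\beta$ gives the coefficient $\mathbb{P}\{\partial^2_{\beta\beta}m+\partial^2_{f\beta}m\,h\}$ multiplying $(\hat{\beta}_n^M-\beta_0)$. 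Differentiating in $f$ along the direction $\hat{f}_n^M-f_0$ gives $\mathbb{P}\{\partial^2_{\beta f}m\,(\hat{f}_n^M-f_0)+\partial^2_{ff}m\,h\,(\hat{f}_n^M-f_0)\}$.

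Assumption \ref{as_pre}(d) states exactly that the drift minus this linear term is $o_p(n^{-1/2})$. Multiplying by $\sqrt{n}$ and substituting into the display above, then writing the expectations as integrals against $dP(w)$, yields \eqref{co}.

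The main obstacle is bookkeeping rather than analysis: the linear term must match (d) exactly, and $h$ must be held fixed, so the $o_p$ statements are pointwise in $h$. In particular, the derivative of $\partial_f m\cdot h$ with respect to $f$ must be recorded as $\partial^2_{ff}m\,h\,(\hat{f}_n^M-f_0)$. Since (d) is assumed directly, the rigor needed from (c) reduces to confirming this form of the derivative.
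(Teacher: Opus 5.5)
Your proposal is correct and follows the paper's proof essentially line for line. The paper likewise uses Assumption \ref{as_pre}(a) and (b) to equate $-\sqrt{n}\,\mathbb{P}\{\phi_h(\hat{\beta}_n^M,\hat{f}_n^M;W)-\phi_h(\beta_0,f_0;W)\}$ with $\sqrt{n}(\mathbb{P}_n-\mathbb{P})\phi_h(\beta_0,f_0;W)+o_p(1)$, then invokes (c) and (d) to replace this drift by its linearization; as you note, (d) is assumed directly and does the real work in that second step.
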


Equation \eqref{co} implies that the limiting distribution of $\sqrt{n} (\hat{\beta}_n^M - \beta_0)$ depends on $\hat{f}_n^M$ solely through the term
$\sqrt{n} \int \{ \partial^2_{\beta f} m(\beta_0, f_0; w)(\hat{f}^M_n-f_0)  + \partial^2_{f f} m(\beta_0, f_0; w)h(\hat{f}^M_n-f_0)   \} dP(w)$.
If there exists a direction $h_0 \in  \mathcal{F} $ such that this term equals zero, then $\hat{f}_n^M$ has no first-order impact on $\sqrt{n} (\hat{\beta}_n^M - \beta_0)$ along any path. Therefore, the resulting influence function is orthogonal with respect to the nuisance function. Thus, we can demonstrate that a valid direction $h_0$ that satisfies  the considered condition is
\begin{equation}\label{h^*}
h_0(x) = -\frac{E[\partial^2_{\beta f} m(\beta_0, f_0;W)\mid X=x]}{E[\partial^2_{f f} m(\beta_0, f_0;W)\mid X=x]}, 
\end{equation}
where the denominator is assumed to be bounded away from zero so that $h_0$ is well-defined.
With such a designed $h_0$, the $M$-estimator $\hat{\beta}_n^M$ achieves asymptotic normality, satisfying
\begin{align*}
&-\sqrt{n} \int\{\partial^2_{\beta \beta} m(\beta_0, f_0;w) + \partial^2_{f \beta} m(\beta_0, f_0;w) h_0  \}  dP(w) (\hat{\beta}_n^M- \beta_0)\nonumber\\
= &\sqrt{n}(\mathbb{P}_n-\mathbb{P}) \{\partial_{\beta} m(\beta_0, f_0;W) + \partial_{f} m(\beta_0, f_0;W) h_0(X) \} + o_p(1).
\end{align*}

It is worth noting that when $m$ is a negative log-likelihood function, the $M$-estimator $\hat{\beta}_n^M$ achieves the asymptotic information bound (\cite{van1991differentiable, bickel1993efficient}), and the centered influence function $\partial_{\beta} m(\beta, f;w) + \partial_{f} m(\beta, f;w)h$ is the so-called efficient score function with nuisance functions $f_0$ and $h_0$.

Before continuing the discussion, we mention Algorithm \ref{alg:cross_fitting} that specifies a cross-fitting estimation procedure for the target parameter $\beta_0$ via a user-specified score function $\psi$. 
This procedure helps relax the Donsker conditions on $\mathcal{F}_n$ and $\mathcal{F}$ via the use of Neyman-orthogonal score functions; a more detailed discussion of this will be provided later in Remark \ref{RE1}. 
We assume that the nuisance parameter space $\mathcal{F}$ is a set of functions on $\mathcal{X}$ that are uniformly bounded by a positive constant $B$. 
Furthermore, we claim that an estimator of $  (f_0,h_0)$, $ (\hat{f}, \hat{h})$ is $o_p(n^{-1/4})$-consistent if both $\hat{f}$ and $\hat{h}$ converge at a rate of $o_p(n^{-1/4})$.

\begin{theorem}[Neyman orthogonality in the fully coupled semiparametric model] \label{NO_score}
Suppose the target parameter $\beta_0$ and the nuisance parameter $f_0$ are jointly characterized by \eqref{M}. 
Assume the following conditions hold:
\begin{itemize}
\item[(a)] The function $m$ is thrice continuously differentiable with respect to its arguments.
\item[(b)] The function $h_0$ in \eqref{h^*} is well-defined.
\item[(c)] $\partial_{\beta}\mathbb{P} m(\beta_0, f_0; W) = 0$, and $\partial_{f} \mathbb{P}m(\beta_0, f_0; W)[h] = 0$ for any direction $h \in \mathcal{F}$.
\end{itemize}
Then, the following properties hold:
\begin{itemize}
\item[(i)] The function 
\begin{align}\label{def_NO}
\psi^*(\beta, f, h; w) = \partial_{\beta} m(\beta, f; w) + \partial_{f} m(\beta, f; w)h  
\end{align}
serves as a valid score function for $\beta_0$ with nuisance functions $\theta_0 = (f_0, h_0)$. 
Furthermore, it satisfies the Neyman-orthogonal condition at $(\beta_0, \theta_0)$ with respect to $\theta_0$.
\item[(ii)] Assume that estimators $(\hat{\theta}_{n_1},\hat{\theta}_{n_2},\hat{\beta}_{n_1},\hat{\beta}_{n_2})$ constructed in Algorithm \ref{alg:cross_fitting} based on $\psi^*$ in \eqref{def_NO} are $o_p(n^{-1/4})$-consistent. Then, the output of Algorithm \ref{alg:cross_fitting}, denoted by $\hat{\beta}_n^{R}$, has the asymptotic expansion
$$-\sqrt{n}(\hat{\beta}_n^{R} - \beta_0)\partial_{\beta}\mathbb{P}\psi^*(\beta_0 ,f_0,h_0;W) = \sqrt{n}(\mathbb{P}_n-\mathbb{P})\psi^*(\beta_0, f_0,h_0;W) + o_p(1).$$
\end{itemize}
\end{theorem}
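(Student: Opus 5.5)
For part (i), validity of the score comes straight from hypothesis (c). Since $m$ is thrice continuously differentiable, differentiation and expectation can be interchanged. Hence $\mathbb{P}\partial_\beta m(\beta_0,f_0;W)=\partial_\beta\mathbb{P}m(\beta_0,f_0;W)=0$ and $\mathbb{P}\{\partial_f m(\beta_0,f_0;W)h\}=\partial_f\mathbb{P}m(\beta_0,f_0;W)[h]=0$ for every $h\in\mathcal{F}$. Taking $h=h_0$, which is allowed because $h_0$ is well defined and bounded by (b), gives $\mathbb{P}\psi^*(\beta_0,f_0,h_0;W)=0$. Finite second moment follows from boundedness of $\mathcal{F}$ and continuity of the derivatives of $m$ on the relevant compact range.

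Neyman orthogonality must then be checked separately in the two nuisance directions.
\begin{itemize}
\item[] \emph{Direction $h$.} The derivative of $\mathbb{P}\psi^*$ in $h$ along $h-h_0$ is $\mathbb{P}\{\partial_f m(\beta_0,f_0;W)(h-h_0)\}$. This vanishes by (c), since $h-h_0$ is an admissible direction (or by linearity of $\partial_f\mathbb{P}m[\cdot]$).
\item[] \emph{Direction $f$.} The derivative along $f-f_0$ is
\[
\mathbb{P}\bigl[\{\partial^2_{\beta f}m(\beta_0,f_0;W)+\partial^2_{ff}m(\beta_0,f_0;W)h_0(X)\}\{f(X)-f_0(X)\}\bigr].
\]
Condition on $X$ using the tower property; since $f-f_0$ and $h_0$ are functions of $X$ alone, they factor out. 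The integrand becomes
\[
\bigl(E[\partial^2_{\beta f}m\mid X]+h_0(X)E[\partial^2_{ff}m\mid X]\bigr)(f-f_0)(X),
\]
and the bracket is identically zero by the definition \eqref{h^*}.
\end{itemize}
This is the heart of (i).

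For part (ii), I would follow the standard DML argument for cross-fitting. Write $\hat\beta^R_n$ as the solution (or approximate solution, up to $o_p(n^{-1/2})$) of $\sum_k \mathbb{P}_{n,k}\psi^*(\beta,\hat\theta_{-k};W)=0$, where $\hat\theta_{-k}$ is trained on the complementary fold. Expand in $\beta$ around $\beta_0$ using the mean value theorem. The derivative term converges to $\partial_\beta\mathbb{P}\psi^*(\beta_0,\theta_0;W)$ by consistency of the estimators and a law of large numbers conditional on the other fold.

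It then remains to show
\[
\sqrt{n}\,\mathbb{P}_{n,k}\psi^*(\beta_0,\hat\theta_{-k})=\sqrt{n}\,\mathbb{P}_{n,k}\psi^*(\beta_0,\theta_0)+o_p(1).
\]
Split the difference into two pieces.
\begin{itemize}
\item[] \emph{Empirical-process piece.} The piece $\sqrt{n}(\mathbb{P}_{n,k}-\mathbb{P})\{\psi^*(\beta_0,\hat\theta_{-k})-\psi^*(\beta_0,\theta_0)\}$ is handled conditionally on the auxiliary fold. Chebyshev's inequality bounds it by the $L_2(P)$ norm of the difference. That norm is $o_p(1)$ by the Lipschitz property of $\psi^*$ in $(f,h)$ on bounded sets, which follows from (a). No Donsker condition is needed here.
\item[] \emph{Bias piece.} The piece $\sqrt{n}\{\mathbb{P}\psi^*(\beta_0,\hat\theta_{-k})-\mathbb{P}\psi^*(\beta_0,\theta_0)\}$ is Taylor-expanded to second order in $\theta$. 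The first-order term vanishes by the orthogonality just proved. The second-order remainder involves $\partial^2_{ff}$, $\partial^3_{fff}$ and mixed $f$–$h$ terms. It is bounded by $C\{\|\hat f-f_0\|_2^2+\|\hat f-f_0\|_2\|\hat h-h_0\|_2\}$, which is $o_p(n^{-1/2})$ under the $o_p(n^{-1/4})$ rates.
\end{itemize}

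Averaging over the folds then gives the stated expansion.

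The main obstacle is the bias piece: the second-order remainder must be controlled uniformly, and this is where third differentiability and uniform boundedness of $\mathcal{F}$ are needed. I would bound the relevant derivatives by a constant on the compact range $[-B,B]$, assuming the necessary envelope integrability as implicit in (a).
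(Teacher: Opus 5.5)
Your proof is correct.

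\textbf{Part (i).} This matches the paper's argument: the tower property together with the definition \eqref{h^*} handles the $f$-direction, and the first-order condition (c) handles the $h$-direction. You additionally verify $\mathbb{P}\psi^*(\beta_0,f_0,h_0;W)=0$, which the paper leaves implicit.

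\textbf{Part (ii).} You share the paper's skeleton: conditional Chebyshev given the held-out fold for the nuisance empirical-process term, orthogonality to remove the first-order bias, an $o_p(n^{-1/2})$ second-order remainder, and averaging over folds. Where you differ is in how the $\beta$-dependence is handled.

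The paper Taylor-expands the population map $\mathbb{P}\psi^*(\hat\beta_{n_2},\hat f_{n_1},\hat h_{n_1};W)$ jointly in $(\beta,\theta)$. This leaves the term $\sqrt{n_2}(\mathbb{P}_{n_2}-\mathbb{P})\{\psi^*(\hat\beta_{n_2},\hat f_{n_1},\hat h_{n_1};W)-\psi^*(\beta_0,\hat f_{n_1},\hat h_{n_1};W)\}$. The paper controls it with a maximal inequality (Lemma 3.4.2 of van der Vaart and Wellner) over $|\beta-\beta_0|\le\delta_n=o(n^{-1/4})$, plus a remainder in $|\hat\beta_{n_2}-\beta_0|^2$. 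Both steps use the $o_p(n^{-1/4})$ rate of $\hat\beta_{n_k}$.

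Your mean-value expansion of the sample equation in the scalar $\beta$ avoids that term and any entropy bound. It needs only consistency of $\hat\beta_{n_k}$, given $\partial_\beta\mathbb{P}\psi^*(\beta_0,f_0,h_0;W)\neq 0$. The cost is that the empirical Jacobian is evaluated at an intermediate point $\tilde\beta_k$ that depends on the same fold. A conditional law of large numbers at fixed $\beta$ therefore does not apply directly. You first need a local Lipschitz envelope for $\partial_\beta\psi^*$ in $\beta$ to move the evaluation point to $\beta_0$. Your remainder bound is also slightly sharper than the paper's: $\psi^*$ is linear in $h$, so no $\|\hat h-h_0\|_2^2$ term arises.

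\textbf{A framing point.} Algorithm \ref{alg:cross_fitting} solves each fold's equation separately and then averages; it does not solve the pooled equation. Run your expansion fold by fold, as your final averaging step effectively does. Claiming that the average solves the pooled equation up to $o_p(n^{-1/2})$ would itself presuppose $\sqrt{n}$-consistency of each $\hat\beta_{n_k}$.
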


The proof of Theorem \ref{NO_score} is given in the Appendix.
\begin{algorithm}[h]
\caption{Cross-fitting Estimation}
\label{alg:cross_fitting}
\begin{algorithmic}[1]
\REQUIRE Pooled sample $ S= \{W_i\}_{i=1}^n$, score function $\psi$ with nuisance function $\theta_0$.
\ENSURE The target estimator $\hat{\beta}_n$.

\STATE \textbf{Step I: Data Splitting} \\
Randomly split the  pooled sample $S$ into two subsets $S_1 = \{W_i\}_{i=1}^{n_1}$ and $S_2 = \{W_i\}_{i=1}^{n_2}$ with sample sizes $n_1$ and $n_2$ such that $n_1 = n_2$.

\FOR{$(j, k) \in \{(1, 2), (2, 1)\}$}
\STATE \textbf{Step II: Nuisance Function Estimation} \\
Construct an estimator $\hat{\theta}_{n_j} $ of nuisance function $\theta_0$ using $S_j$ only.

\STATE \textbf{Step III: Target Parameter Estimation} \\
Solve the following estimating equation using the data in subset $S_k$ to obtain $\hat{\beta}_{n_k}$:
$  \sum_{i=1}^{n_k} \psi(\hat{\beta}_{n_k}, \hat{\theta}_{n_j} ; W_i) = 0.$
\ENDFOR

\STATE \textbf{Step IV: Cross Fitting} \\
Compute the target estimator by averaging the results from both folds:
$$\hat{\beta}_n = \frac{1}{2} \left( \hat{\beta}_{n_1} + \hat{\beta}_{n_2} \right).$$

\RETURN $\hat{\beta}_n$
\end{algorithmic}
\end{algorithm} 
\begin{remark}[The role of Assumption \ref{as_pre} and the functional asymptotic expansion in \eqref{co} in establishing Neyman-orthogonal score functions]
Assumption \ref{as_pre} is proposed to derive the asymptotic expansion of $\sqrt{n}(\hat{\beta}_n^M-\beta_0)$ in \eqref{co}. 
Based on this expansion, we choose a special weight function $h_0$, construct a reasonable influence function of $\hat{\beta}_n^M$, $\psi^*$ that is insensitive to the first-order effect of $\sqrt{n}(\hat{f}_n^M-f_0)$, and prove that it can serve as a valid score function for $\beta_0$ in Theorem \ref{NO_score}. 
It is important noting that the Neyman orthogonality of $\psi^*$ holds independently of Assumption \ref{as_pre}.

Comparing Assumption \ref{as_pre} with the conditions outlined in Theorem \ref{NO_score}(ii), we conclude that the asymptotic normality of the estimator $\hat{\beta}_n^{R}$ is established under milder conditions.
\end{remark}

\begin{remark}[Robust inference via orthogonality and cross-fitting]\label{RE1}
Deriving the limiting distribution of the estimator of $\beta_0$ typically requires that the nuisance estimators belong to Donsker classes (\cite{10.1111/ectj.12097}).
However, combining with the Neyman-orthogonal conditions, we can relax this restrictive condition by employing the cross-fitting strategy outlined in Algorithm \ref{alg:cross_fitting}. 
Further details are provided in the proof of Theorem \ref{NO_score}.

On the other hand, combining standard semiparametric inference with machine learning methods faces a challenge: Assumption \ref{as_pre}(b)(ii) is usually not satisfied. 
As discussed by \cite{yan2025semiparametric}, when the nuisance $M$-estimator $\hat{f}_{n}^M$ is constructed via deep learning, the approximation capability of the neural network tangent space should be carefully analyzed to ensure this assumption holds. 
On the contrary, as shown in the proof of Theorem \ref{NO_score}, the proposed approach decouples the estimations of $f_0$ and $\beta_0$, thereby eliminating the reliance on Assumption \ref{as_pre}(b)(ii) when establishing the asymptotic normality of $\hat{\beta}_n^R$.
\end{remark}

\begin{remark}[Comparison with \cite{10.1111/ectj.12097}]\label{RE2}
For \eqref{M}, \cite{10.1111/ectj.12097} provided a concentrating-out approach to construct Neyman-orthogonal score functions.
Letting
\begin{align*}
f_{\beta} = \argmin_{f\in \mathcal{F}} \mathbb{P} m(\beta,f;W),
\end{align*}
\cite{10.1111/ectj.12097} obtained the Neyman score function $\psi^*$:
\begin{align*}
\psi^* (\beta, f_{\beta};W) = - \frac{d m(\beta,f_{\beta};W)}{d \beta}.
\end{align*}

However, the issue is that deriving a valid $f_{\beta}$ is usually not easy, particularly when $f_0$ is not a regression function, a scenario in which $f_{\beta}$ typically lacks a closed-form expression.
For instance, in partial linear Cox model (\cite{zhong2022deep}), where the conditional hazard function given covariates $(X,Z) \in \mathbb{R}\times\mathbb{R}^p$ is expressed as
$\lambda(t\mid X ,Z ) =\lambda_0(t) \exp\{\beta_0 X + f_0(Z)\} $
with univariate function $\lambda_0(t)$, and the criterion in \eqref{M} then becomes the negative log partial likelihood.
Therefore, calculating $f_{\beta}$ is extremely challenging, so that the concentrating-out approach is no longer applicable.
Instead, the proposed model-free method for developing $\psi^*$ neither suffers from such a problem nor assumes that nuisance parameters are regression functions.
Theorem \ref{NO_score} can be naturally extended to cases involving multiple nuisance functions, thus providing a more flexible and feasible approach for orthogonality construction.
\end{remark}

Theorem \ref{NO_score} establishes a Neyman-orthogonal score condition for $\beta_0$.
The proposed estimation offers two primary advantages:
\begin{enumerate}
\item[(i)] Although we introduce an additional nuisance function $h_0$, the proposed score function $\psi^*$ significantly relaxes the conditions required for achieving asymptotic normality of $\hat{\beta}_n^R$ through (a) cross-fitting technology; and (b) constructing $o_p(n^{-1/4})$-consistent estimators for all nuisance functions.

\item[(ii)] The estimators $\hat{\beta}_n^R$ and $\hat{\beta}_n^M$ have the same asymptotic variance.
Thus, when the working estimator $\hat{\beta}_n^M$ is semiparametrically efficient, the robust estimator $\hat{\beta}_n^R$ also attains the information bound.
Here, the augmented scores become efficient Neyman-orthogonal scores.
\end{enumerate}

To illustrate the application of Theorem \ref{NO_score}, we revisit the example considered by \cite{10.1111/ectj.12097} and apply the proposed Neyman-orthogonal method to the partially linear model.

\begin{example}[Partially linear regression (\cite{10.1111/ectj.12097})]\label{ex3}
Let $Y\in\mathbb{R}$ be response variable along with covariates $D\in\mathbb{R}$ and $X\in\mathbb{R}^p$.
Under partially linear model,
\begin{equation}\label{PL}
Y=\beta_0 D + f_0(X) + \epsilon,\ E[\epsilon\mid D,X]=0,
\end{equation}
we consider two estimation approaches to $\beta_0$ and provide the limiting distributions of associated estimators.
\begin{enumerate}
\item[(i)] $M$-estimation: 
$(\hat{\beta}_n^M,\hat{f}_n^M) := \argmin_{(\beta,f) \in\mathbb{R}\times \mathcal{F}_n} \mathbb{P}_n\{\beta D+ f(X) - Y\}^2$. 
By Proposition \ref{prop0}, under some basic assumptions, we have that for any $h\in \mathcal{F}$,
\begin{align*}
&-\sqrt{n} \mathbb{P}\{ D^2 + D h(X)  \} (\hat{\beta}_n^M - \beta_0) - \sqrt{n} \mathbb{P} [\{ D + h(X) \} \{\hat{f}_n^M(X)-f_0(X)\}] \nonumber\\
= &\sqrt{n}(\mathbb{P}_n-\mathbb{P})[\{ D+h(X) \} \{\beta_0D + f_0(X) -Y\}] + o_p(1).
\end{align*}
Therefore, choosing the direction $h_0 (x)= -E( D\mid X=x )$, we can conclude that
\begin{align*}
&-\sqrt{n}(\hat{\beta}_n^M -\beta_0) \mathbb{P}\{ D^2 - D E( D\mid X ) \} \nonumber\\
= & \sqrt{n}(\mathbb{P}_n-\mathbb{P}) [\{ D-E( D\mid X )\}  \{ \beta_0 D + f_0(X) -Y \}] +o_p(1).
\end{align*}
Then, the asymptotic normality of $\hat{\beta}_n^M$ holds when $\mathbb{P}\{ D^2 - D E( D\mid X ) \} \neq 0$.

\item[(ii)] Locally robust estimation:
Theorem \ref{NO_score} (i) provides a Neyman-orthogonal score condition with respect to $\beta_0$:
$$\mathbb{P}\psi^*(\beta_0,h_0,f_0;W) = \mathbb{P}[\{D-E(D\mid X)\} \{\beta_0D + f_0(X) -Y \} ] = 0 $$
with the nuisance function $h_0(x) = -E( D\mid X=x )$.
Under conditions in the Theorem \ref{NO_score} (ii), the estimator $\hat{\beta}_n^R$ obtained via Algorithm \ref{alg:cross_fitting} is asymptotically normal 
\begin{align*}
&-\sqrt{n}(\hat{\beta}_n^R-\beta_0) \mathbb{P}\{ D^2 - D E (D\mid X ) \} \nonumber\\
= &\sqrt{n}(\mathbb{P}_n-\mathbb{P}) [\{D-E(D\mid X )\}  \{\beta_0 D + f_0(X) -Y\} ] +o_p(1).
\end{align*}
\end{enumerate}

Although such an example was comprehensively investigated by \cite{10.1111/ectj.12097}, following the analysis above, we can gain a novel insight.

When $\epsilon$ is independent of $(D,X)$ and follows a normal distribution, we can demonstrate that the estimators $\beta_n^M$ and $\beta_n^R$ are semiparametrically efficient under regularity conditions. However, when the distribution of $\epsilon$ is unknown, achieving an efficient estimation of $\beta_0$ requires estimating the probability density function of $\epsilon$. This maximum likelihood estimation approach is more complex than the standard least squares criterion.
But in this situation, Theorem \ref{NO_score} is still useful as target and nuisance parameters can be identified through a likelihood function simultaneously.
It establishes that the efficient score for $\beta_0$, denoted by $\psi^*(\beta,h,\lambda;w):=(d+h)\lambda'(y-\beta d - f(x))$, satisfies the Neyman-orthogonal condition, yielding:
\begin{align*}
\mathbb{P}\psi^*(\beta_0,f_0,\lambda_0;W) = \mathbb{P} \left[ (D+h_0 (X)) \lambda_0'(Y-\beta_0 D- f_0(X)) \right] = 0,
\end{align*}
where $\lambda_0$ denotes the negative log-density function of $\epsilon$, and $\lambda'$ and $\lambda_0'$ denote the first derivatives of $\lambda$ and $\lambda_0$, respectively.


\end{example}

\subsubsection{Partially decoupled semiparametric model}\label{S2}
Again, we consider a semiparametric model where the parameter of interest,  $\beta_0\in\mathbb{R}$, satisfies the score condition 
\begin{align}\label{ori_score1}
\mathbb{P}  \psi(\beta_0, f_0; W)  = 0.
\end{align} 
However, the nuisance function $f_0$ cannot be identified through $\psi$ any more.
Instead, it can be characterized as a minimizer under a criterion with the known function $m_1$ such that
\begin{equation}\label{ori_M}
f_0 = \argmin_{f \in  \mathcal{F}} \mathbb{P} m_1(f; W).
\end{equation}
This partially decoupled structure is prevalent in causal inference for observational studies, where potential outcomes are identified through covariate adjustment methods, such as inverse probability weighting.
In particular, $f_0$ may represent the propensity score that is characterized by minimizing cross-entropy loss; $\beta_0$ denotes the causal effect of interest that involves the potential outcomes, which can be identified using $f_0$ under standard assumptions (\cite{bang2005doubly, firpo2007efficient}). 

Let  
\begin{equation}\label{weight}
h_0(x) = -\frac{E[\partial_{f}  \psi(\beta_0, f_0;W)\mid X=x]}{E[\partial^2_{ff} m_{1}( f_0;W)\mid X=x]}.  
\end{equation} 
\begin{theorem}[Neyman orthogonality in the partially decoupled semiparametric model] \label{NO2_score}
Suppose the target parameter $\beta_0$ is identified via the score condition \eqref{ori_score1}, while the nuisance function $f_0$ is independently characterized by \eqref{ori_M}.
Assume the following conditions hold:
\begin{itemize}
\item[(a)] The functions $\psi$ and $m_1$ are twice and thrice continuously differentiable with respect to their arguments, respectively.
\item[(b)] The function $h_0$ in \eqref{weight} is well-defined.
\item[(c)] $\partial_{f}\mathbb{P} m_1(f_0; W)[h] = 0$ for any direction $h \in \mathcal{F}$.
\end{itemize}
Then, the following properties hold:
\begin{itemize}
\item[(i)] The function 
\begin{align}\label{def_NO2}
\psi^*(\beta, f, h; w) = \psi(\beta,f;w) + \partial_{f} m_{1}( f;w) h 
\end{align}
serves as a valid score function for $\beta_0$ with nuisance parameters $\theta_0 = (f_0, h_0)$, and it satisfies the Neyman-orthogonal condition at $(\beta_0, \theta_0)$ with respect to $\theta_0$.
\item[(ii)] Assume that estimators $(\hat{\theta}_{n_1},\hat{\theta}_{n_2},\hat{\beta}_{n_1},\hat{\beta}_{n_2})$ constructed in Algorithm \ref{alg:cross_fitting} based on $\psi^*$ in \eqref{def_NO2} are $o_p(n^{-1/4})$-consistent.
Then, the output of Algorithm \ref{alg:cross_fitting}, denoted by $\hat{\beta}_n^{R}$, has the asymptotic expansion
$$-\sqrt{n}(\hat{\beta}_n^{R} - \beta_0)\partial_{\beta}\mathbb{P}\psi^*(\beta_0 ,f_0,h_0;W) = \sqrt{n}(\mathbb{P}_n-\mathbb{P})\psi^*(\beta_0, f_0,h_0;W) + o_p(1).$$
\end{itemize}
\end{theorem}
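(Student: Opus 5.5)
We plan to prove part (i) by direct computation of Gateaux derivatives, and part (ii) by the standard cross-fitting argument, using the orthogonality from (i) to kill the first-order nuisance term.

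\emph{Validity of the score.} By condition (c), $\partial_f \mathbb{P} m_1(f_0;W)[h_0]=0$. Since $h_0$ is well-defined by condition (b), and the regularity assumption permits interchanging differentiation and expectation, this equals $\mathbb{P}\{\partial_f m_1(f_0;W)h_0(X)\}=0$. Combined with \eqref{ori_score1}, this gives $\mathbb{P}\psi^*(\beta_0,f_0,h_0;W)=0$. Square integrability follows from the boundedness of $\mathcal{F}$ (uniform bound $B$) and the continuity of the derivatives in condition (a).

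\emph{Orthogonality with respect to $h$.} The derivative in a direction $\tilde h-h_0$ is $\mathbb{P}\{\partial_f m_1(f_0;W)(\tilde h-h_0)(X)\}$, which vanishes by condition (c) applied with the direction $\tilde h-h_0$.

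\emph{Orthogonality with respect to $f$.} For a direction $g=f-f_0$ we compute
\[
\mathbb{P}\{\partial_f\psi(\beta_0,f_0;W)g(X)\}+\mathbb{P}\{\partial^2_{ff}m_1(f_0;W)h_0(X)g(X)\}.
\]
Conditioning on $X$, this equals
\[
\mathbb{P}\bigl[g(X)\{E[\partial_f\psi\mid X]+h_0(X)E[\partial^2_{ff}m_1\mid X]\}\bigr],
\]
which is zero by the definition \eqref{weight}. This is the same tower-property step that produced \eqref{h^*} in Theorem \ref{NO_score}, and it is where we use that $g$ and $h_0$ are functions of $X$ alone.

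\emph{Part (ii).} We work fold by fold. For fold $k$ with nuisance $\hat\theta_{n_j}$ estimated on the independent fold $S_j$, the estimating equation gives $\mathbb{P}_{n_k}\psi^*(\hat\beta_{n_k},\hat\theta_{n_j})=0$. We decompose this as
\[
0=(\mathbb{P}_{n_k}-\mathbb{P})\psi^*(\beta_0,\theta_0)+R_1+\mathbb{P}\{\psi^*(\hat\beta_{n_k},\hat\theta_{n_j})-\psi^*(\beta_0,\theta_0)\},
\]
where
\[
R_1=(\mathbb{P}_{n_k}-\mathbb{P})\{\psi^*(\hat\beta_{n_k},\hat\theta_{n_j})-\psi^*(\beta_0,\theta_0)\}.
\]
The three pieces are handled as follows.
\begin{itemize}
\item[(1)] The term $R_1$ is $o_p(n^{-1/2})$. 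Conditionally on $S_j$, the nuisance is fixed, so a Chebyshev bound applies, using the $L_2$ convergence of $\hat\theta_{n_j}$ and the smoothness of $\psi^*$. The dependence on $\hat\beta_{n_k}$ is handled by a finite-dimensional (Lipschitz in $\beta$) equicontinuity argument. This is the step that replaces Donsker conditions.
\item[(2)] We Taylor expand the population term to second order around $(\beta_0,\theta_0)$. The linear term in $\beta$ gives $\partial_\beta\mathbb{P}\psi^*\,(\hat\beta_{n_k}-\beta_0)$. The first-order terms in $\theta$ vanish by part (i).
\item[(3)] The second-order remainders are bounded by $\|\hat f-f_0\|^2+\|\hat f-f_0\|\|\hat h-h_0\|+|\hat\beta-\beta_0|^2+|\hat\beta-\beta_0|\|\hat\theta-\theta_0\|$. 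Using the thrice (respectively twice) differentiability and the uniform bound $B$, these are $o_p(n^{-1/2})$ under the $o_p(n^{-1/4})$ rates.
\end{itemize}

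Averaging the two folds, with $n_k=n/2$, yields the stated expansion. The main obstacle is item (1) together with the cross term involving $\hat h$: $\psi^*$ is linear in $h$, so we need the conditional variance of $\partial_f m_1(\hat f)\hat h-\partial_f m_1(f_0)h_0$ to tend to zero. We would control it by writing
\[
\partial_f m_1(\hat f)\hat h-\partial_f m_1(f_0)h_0=\partial_f m_1(\hat f)(\hat h-h_0)+\{\partial_f m_1(\hat f)-\partial_f m_1(f_0)\}h_0,
\]
and then using the boundedness of $h_0$ and the local Lipschitz continuity of $\partial_f m_1$ on the bounded range of $\mathcal{F}$.
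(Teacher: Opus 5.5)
Your proposal is correct and follows the same route as the paper, which omits this proof and defers to the proof of Theorem \ref{NO_score}. In part (i), you cancel $\mathbb{P}\{\partial_f\psi\, g\}$ against $\mathbb{P}\{\partial^2_{ff}m_1\, h_0\, g\}$ via the tower property and the definition \eqref{weight}, and use condition (c) for the $h$-direction; in part (ii), your conditional Chebyshev bound and finite-dimensional equicontinuity argument match the paper's handling of its terms $A$ and $B$ (the latter via Lemma 3.4.2 of van der Vaart and Wellner), and your splitting of $\partial_f m_1(\hat f)\hat h - \partial_f m_1(f_0)h_0$ just spells out the bound the paper attributes to smoothness and uniform boundedness.
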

As proof of Theorem \ref{NO2_score} is similar to that of Theorem \ref{NO_score}, it is omitted. 
\begin{remark}[Comparison with   \cite{chernozhukov2022locally}]\label{RE3} 
\cite{chernozhukov2022locally} demonstrated that the existence of first stage influence function $\phi(\beta , \theta ; w)$ with nuisance function $\theta_0$,  is essential for constructing orthogonal scores. The function $\phi$ satisfies that $\mathbb{P}\phi(\beta_0,  \theta_0; W) =0$, $\mathbb{P}|\phi(\beta_0, \theta_0; W)|^2<\infty$, and
\begin{equation}
\sqrt{n}\mathbb{P}\{ \psi(\beta_0, \hat{f}_n; W) - \psi(\beta_0, f_0; W) \} =  \sqrt{n}(\mathbb{P}_n-\mathbb{P}) \phi(\beta_0, \theta_0; W ) + o_p(1). \nonumber
\end{equation}
Then, the orthogonal score is $\psi^*(\beta , \theta ; w) = \psi(\beta , f ; w) + \phi(\beta , \theta ; w) $.
However, their discussion focused on  deriving $\phi$ based on specific statistic models. 
In contrast, we assume that the nuisance function $f_0$ arises from a general $M$-estimation.  
We derive the closed forms for $\phi$ and $\theta_0$ as given before Theorem \ref{NO2_score}.
Our approach is easy to implement and requires no assumptions about the model structure, making it applicable to a wide range of semiparametric models for deriving closed forms of Neyman-orthogonal scores for target parameters.

The idea is that we treat $\hat{f}_n$ as the working $M$-estimator of $f_0$, which satisfies $\hat{f}_n = \argmin_{f \in  \mathcal{F}_n} \mathbb{P}_n m_1(f; W) $ and exhibits the asymptotic functional normality.
Specifically, adapting the proof of Proposition \ref{prop0}, we have that under some regularity conditions, for any $h\in \mathcal{F}$,
$$-\sqrt{n}\mathbb{P} \{\partial^2_{ff} m_1(f_0;W)(\hat{f}_n-f_0) h\} = \sqrt{n}(\mathbb{P}_{n} - \mathbb{P}) \{\partial_f m_1(f_0;W)h\} +  o_p(1).$$
By choosing a direction $h_0 $ as shown in \eqref{weight}, such that
\begin{align*}
\sqrt{n}\mathbb{P}   \{\psi(\beta_0, \hat{f}_n; W) - \psi(\beta_0, f_0; W) \} =& \sqrt{n}\mathbb{P} \{\partial^2_{ff} m_1(f_0;W)(\hat{f}^M_n-f_0)  h_0\}   + o_p(1)\\
=& \sqrt{n}(\mathbb{P}_{n} - \mathbb{P}) \{\partial_f m_1(f_0;W)h_0\} + o_p(1),
\end{align*}
we can conclude that the specified influence function $\phi(\beta , \theta ; w) = \partial_f m_1(f ;w)h$ with $\theta_0= (f_0, h_0)$ is a valid first stage influence function.
\end{remark}

\begin{remark}[Comparison with 
\cite{chernozhukov2022debiased}]\label{RE4}
\cite{chernozhukov2022debiased} considered a specific setting where the target parameter is a linear functional of a nuisance regression function and showed that the Riesz representer (additional nuisance function) can be used to construct Neyman-orthogonal scores via the Riesz representation theorem.

Specifically, let $X$ and $Y$ denote the predictor and response variables, respectively, and let $w=(x,y)$.
Suppose that a score function for $\beta_0$ is $\psi(\beta,f;w) = \beta - \phi(f;w)$, and define an objective function for estimation of regression function $f_0$ as $m_1(f;w) = (f - y)^2$.
Under some mild conditions, Theorem \ref{NO2_score} implies that one Neyman-orthogonal score condition for $\beta_0$ is
\begin{align*}
0 = \beta_0 - \mathbb{P}\phi(f_0;W)  + \mathbb{P}  (\{ Y - f_0(X)\} E[\partial_f \phi(f_0;W) \mid X] ) ,
\end{align*}
which aligns with Definition 2.4 in \cite{chernozhukov2022debiased} when $ \mathbb{P} \phi(f;W) $ is a linear functional of $f$.
This shows that the Riesz representer may admit a closed form, $h_0(x) = E[\partial_f \phi(f_0;W) \mid X = x]$.
Therefore, compared to the approach relying on the Riesz representer, the proposed method is more adaptable to general problem settings and yields explicit results. These advantages facilitate the analysis of specific statistical models and significantly enhance the practical usability of Neyman-orthogonal scores.
\end{remark}

To clarify the relative advantages of our method, Table \ref{TAB_comp} summarizes the key differences between the proposed method and the alternative approaches discussed in Remarks \ref{RE2} and \ref{RE4}.

\begin{table}[htbp]                               \centering                                        \caption{Comparison of Neyman-orthogonal score construction frameworks: Proposed, Riesz Representer (\cite{chernozhukov2022debiased}), and Concentrating-out (\cite{10.1111/ectj.12097}).} \label{TAB_comp}     \renewcommand{\arraystretch}{1.3}                 \resizebox{0.9\textwidth}{!}{ 
\begin{tabular}{|p{2.6cm}| p{3.8cm} |p{3.8cm}| p{3.8cm}|}                                      
\hline 
\textbf{Feature} & \textbf{Proposed} & \textbf{Riesz Representer} & \textbf{Concentrating-out} 
\\
\hline                                            \textbf{Applicability} & General coupled and decoupled semiparametric models. & Models where target parameters are linear functionals of nuisance regression functions. & Jointly identified (fully coupled) models. \\ 
\hline  
\textbf{Nuisance structure} & General $M$-estimation and sequential systems. & Conditional expectation operators ($L_2$ spaces). & Models admitting tractable profiled minimizers. \\ 
\hline  
\textbf{Theoretical basis} & Constructive derivation via functional asymptotic theory. & Existential results via Riesz Representation Theorem. & Differentiation of concentrated objectives. \\ 
\hline  
\textbf{Orthogonality mechanism} & Explicit closed-form influence functions. & Debiasing via representer-weighted residuals. & Implicit profiling via parameter-dependent paths. \\ 
\hline  
\textbf{Interpretability} & Constructive closed-form representations for auxiliary nuisance functions. & "Black-box" weighting representers (additional nuisance functions). & Implicit parameter-dependent profiling. \\ 
\hline
\end{tabular}}
\end{table}

Further, to demonstrate the effectiveness of the proposed method, we apply it to a classical quantile treatment effect estimation problem in the observational study and provide a novel Neyman-orthogonal score function.

\begin{example}[Robust quantile treatment effect estimation]\label{ex2}
Let $Y$ and $D$ be the response variable and treatment assignment, respectively, and let $X$ consist of confounders.
Under the strong ignorability condition, \cite{firpo2007efficient} showed that the $\tau$th quantile of potential outcome $Y(1)$ can be identified as
\begin{align*}
\beta_0 = \arg\min_{\beta\in\mathbb{R}} \mathbb{P}\bigg\{ \frac{D}{p_0(X)} \cdot \rho_{\tau}(Y - \beta)\bigg\},
\end{align*}
where $p_0(x):=P(D=1\mid X=x)$ and $\rho_{\tau}(u)=u\{ \tau - I(u\leq0)\} $ with indicator function $I(\cdot)$.

Next, we apply Theorem \ref{NO2_score} to propose a robust estimation method for $\beta_0$.
The positivity assumption yields that there exists $f_0\in \mathcal{F}$ such that $p_0(x) = 1 /[ 1 + \exp\{ -f_0(x)\} ]$, which can be characterized by minimizing cross-entropy loss $m_1(f;w) = \log\ [ 1+\exp \{f(x) \}] -df(x) $ in population, where $w=(x,d,y)$.
Let $m(\beta,f;w) =  d \rho_{\tau}(y - \beta)  [ 1+\exp \{-f(x) \}] $, which is obviously non-differentiable at the point $\beta = y$ with respect to $\beta$.
By \cite{10.1214/10-AOS827} and Fubini's Theorem,
we can demonstrate that $\psi(\beta,f ;w)=d\{I(y-\beta\leq0)-\tau\} [ 1+\exp \{-f(x) \}]$ is a score function for $\beta_0$ with nuisance function $f_0$.
Then, combining $\partial_{f}  \psi(\beta_0, f_0;w) = -d\{I(y-\beta_0\leq0)-\tau\}\exp \{ -f_0(x)\} $ and $\partial^2_{ff} m_{1}( f_0;w) = \exp \{f_0(x)\} /[ 1+\exp \{f_0(x) \}])^2$, we have
\begin{align*}
h_0(x) = -\frac{E[\partial_{f} \psi(\beta_0, f_0;W)\mid X=x]}{E[\partial^2_{ff} m_{1}( f_0;W)\mid X=x]}= \frac{E[D\{I(Y\leq \beta_0) - \tau\} \mid X=x]}{ 1 / [1 + \exp\{-f_0(x)\}] ^2 }.
\end{align*}
Hence, by Theorem \ref{NO2_score}, the Neyman-orthogonal score function for $\beta_0$ with nuisance functions $\theta_0= (f_0,h_0)$ is
\begin{align*}
\psi^*(\beta, f_0,h_0;W) 
= &\frac{D [1+\exp\{f_0(X) \} ]}{\exp\{f_0(X)\}}\{ I(Y-\beta_0\leq 0) - \tau\} + \biggl[ \frac{\exp\{f_0(X)\}}{1+\exp\{ f_0(X)\}}- D \biggr]h_0.
\end{align*} 
Note that $ \partial_{\beta}\mathbb{P} \psi^*(\beta_0, f_0,h_0;W) = f_{Y(1)}(\beta_0)$, where $f_{Y(1)}$ denotes the probability density function of potential outcome $Y(1)$. 
Under conditions in Theorem \ref{NO2_score}, $\hat{\beta}_n^{R}$ obtained via Algorithm \ref{alg:cross_fitting} satisfies
\begin{align*}
&-\sqrt{n}f_{Y(1)}(\beta_0)(\hat{\beta}_n^{R}  - \beta_0)+ o_p(1) \\
=& \frac{1}{ \sqrt{n}}\sum_{i=1}^{n} \biggl\{\frac{D_i}{p_0(X_i)}\{ I(Y_i\leq\beta_0) -\tau \}+\frac{p_0(X_i)-D_i}{p_0^2(X_i)}E[D\{I(Y\leq \beta_0) -\tau\} \mid X_i ] - \beta_0 \biggr\}.
\end{align*}

Different from the theory developed by \cite{firpo2007efficient}, the asymptotic normality of the robust estimator $\hat{\beta}_n^{R}$ does not depend on the methods used to construct estimators of nuisance functions $h_0$ and $p_0$, provided they are $o_p(n^{-1/4})$-consistent. 
This flexibility allows us to apply powerful machine learning approaches to perform nonparametric estimation tasks and implement statistical inference with theoretical guarantees.
Furthermore, the asymptotic variance of our robust estimator matches that of the estimator  in \cite{firpo2007efficient}, demonstrating that $\hat{\beta}_n^{R}$ successfully achieves the semiparametric efficiency bound.

\end{example}

\subsection{Semiparametric model with multiple nuisance functions}\label{SI1}
This subsection extends the method of Subsection \ref{SI2} to a general setting where there are multiple nuisance functions.
The coupled model in Subsection \ref{S1}, where the parameter of interest $\beta_0$ and the nuisance function $f_0$ are jointly identified, can be generalized by treating $f_0$ as a vector of nuisance functions.
The corresponding methods and theoretical results, as established in Theorem \ref{NO_score}, can be applicable.

On the other hand, when multiple nuisance functions must be identified sequentially, the methodology of Subsection \ref{S2} needs to be extended carefully.
The sequential dependence between finite-dimensional nuisance functions complicates the task of establishing asymptotic normality for functionals of their estimators.
In this subsection, we provide a comprehensive approach for constructing the influence function and identifying additional nuisance functions, leveraging asymptotic functional normality theory under a two-stage estimation process.

Let $\Theta$ denote the set of nuisance functions on $\mathcal{X}$ that are uniformly bounded by $B$.
The target parameter, $\beta_0 \in \mathbb{R}$,  is identified by the score condition 
\begin{align}\label{ori_No3}
\mathbb{P} \psi(\beta_0,\mu_0, f_0; W) = 0,
\end{align}
where score function $\psi$ is known and the nuisance functions $(\mu_0, f_0) \in \Theta \times  \mathcal{F}$ are defined through sequential optimization problems involving criterion functions $m_1$ and $m_2$ in the population:
\begin{equation}\label{nuisance}
\mu_0 = \argmin_{\mu\in\Theta} \mathbb{P} m_2(\mu, f_0; W), \quad f_0 = \argmin_{f\in \mathcal{F}} \mathbb{P}m_1(f; W).
\end{equation}
A concrete example of this structure is provided by \cite{10.1093/jrsssb/qkae024}, who considered a regression model with multiple weak invalid instruments. 

We construct the working estimators of $\mu_0$, $f_0$ and $\beta_0$ via sample splitting.
Randomly partition the sample $S$ into two subsets, $S_1 = \{W_i\}_{i=1}^{n_1}$ and $S_2 = \{W_i\}_{i=n_1+1}^n$, with a proportion $n_1/n = p$, where $0 < p < 1$. The estimation proceeds as follows:
\begin{itemize}
\item \textit{Stage 1: Estimation of \( f_0 \).} Using the subsample \( S_1 \), estimate the nonparametric nuisance function via empirical risk minimization:
\begin{equation} \label{2.2}
\hat{f}_{n_1} = \argmin_{f \in  \mathcal{F}_n}\mathbb{P}_{n_1} m_1(f; W),
\end{equation}
where $ \mathcal{F}_n \subseteq  \mathcal{F}$ is a sieve space.

\item \textit{Stage 2: Estimation of \( \mu_0 \).} Using the independent subsample \( S_2 \) and the first-stage estimate \( \hat{f}_{n_1} \), estimate the target parameter:
\begin{equation} \label{2.3}
\hat{\mu}_{n_2} = \argmin_{\mu \in \Theta_n} \mathbb{P}_{n_2} m_2(\mu,\hat{f}_{n_1};W),
\end{equation}
where $\Theta_n \subseteq \Theta$ is a sieve space and \( n_2 = n - n_1 \).
\end{itemize}
Define $\hat{\beta}_n $ as the estimator that satisfies
$n^{-1} \sum_{i=1}^n \psi(\hat{\beta}_n ,\hat{\mu}_{n_2}, \hat{f}_{n_1}; W_i)=0.$

Next, we characterize the influence function of $\hat{\beta}_n$ by leveraging the asymptotic functional normality of the nuisance parameter estimators $\hat{\mu}_{n_2}$ and $\hat{f}_{n_1}$.  
We propose the following proposition.
\begin{proposition}\label{prop1}
Under some conditions, for any $h\in\Theta$, we have
\begin{align*}
-\sqrt{n_2} \ \partial^2_{\mu \mu} \mathbb P m_2(\mu_0,f_0;W)[h] [\hat{\mu}_{n_2}-\mu_0] 
=&\sqrt{n_2}\ \partial^2_{\mu f} \mathbb P m_2(\mu_0,f_0;W)[h][\hat{f}_{n_1}-f_0] \\
&+\sqrt{n_2}\ (\mathbb{P}_{n_2}-\mathbb{P})\{\partial_{\mu} m_2(\mu_0,f_0;W) h\} +o_p(1).
\end{align*}
\end{proposition}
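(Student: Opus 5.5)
The plan is to mimic the proof of Proposition \ref{prop0}, now for the second-stage $M$-estimator $\hat{\mu}_{n_2}$ with the first-stage estimate $\hat{f}_{n_1}$ plugged in. We condition on $S_1$ throughout. Because $S_1$ and $S_2$ are independent, $\hat{f}_{n_1}$ is a fixed function with respect to $\mathbb{P}_{n_2}$. The "some conditions" are the analogues of Assumption \ref{as_pre}, stated for the map $(\mu,f)\mapsto \partial_\mu m_2(\mu,f;W)h$:
\begin{itemize}
\item[(a)] stochastic equicontinuity of $\sqrt{n_2}(\mathbb{P}_{n_2}-\mathbb{P})\{\partial_\mu m_2(\hat\mu_{n_2},\hat f_{n_1};W)h-\partial_\mu m_2(\mu_0,f_0;W)h\}=o_p(1)$;
\item[(b)] the population score condition $\mathbb{P}\{\partial_\mu m_2(\mu_0,f_0;W)h\}=0$, which follows from the first display in \eqref{nuisance}, together with the approximate empirical score condition $\mathbb{P}_{n_2}\{\partial_\mu m_2(\hat\mu_{n_2},\hat f_{n_1};W)h\}=o_p(n_2^{-1/2})$;
\item[(c)] Fréchet differentiability of $(\mu,f)\mapsto\mathbb{P}\{\partial_\mu m_2(\mu,f;W)h\}$ at $(\mu_0,f_0)$;
\item[(d)] a second-order remainder bound, which holds when $\|\hat\mu_{n_2}-\mu_0\|$ and $\|\hat f_{n_1}-f_0\|$ are $o_p(n^{-1/4})$.
\end{itemize}

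The key steps, in order, are as follows. First, write the empirical score condition (b) as
\[
o_p(n_2^{-1/2})=(\mathbb{P}_{n_2}-\mathbb{P})\{\partial_\mu m_2(\hat\mu_{n_2},\hat f_{n_1};W)h\}+\mathbb{P}\{\partial_\mu m_2(\hat\mu_{n_2},\hat f_{n_1};W)h\}.
\]
Second, apply (a) to replace the empirical-process term by $(\mathbb{P}_{n_2}-\mathbb{P})\{\partial_\mu m_2(\mu_0,f_0;W)h\}$, at a cost of $o_p(n_2^{-1/2})$. Third, subtract the population score $\mathbb{P}\{\partial_\mu m_2(\mu_0,f_0;W)h\}=0$ and Taylor expand the population term jointly in $(\mu,f)$ using (c) and (d). This gives
\[
\partial^2_{\mu\mu}\mathbb{P}m_2(\mu_0,f_0;W)[h][\hat\mu_{n_2}-\mu_0]+\partial^2_{\mu f}\mathbb{P}m_2(\mu_0,f_0;W)[h][\hat f_{n_1}-f_0]+o_p(n^{-1/2}).
\]
Fourth, multiply through by $\sqrt{n_2}$ and rearrange. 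This yields the displayed identity, with the cross term in $\hat f_{n_1}-f_0$ moved to the right-hand side.

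The step I expect to be the main obstacle is controlling the joint remainder in (d). Specifically, the mixed second-order term $\partial^3$ acts on $(\hat\mu_{n_2}-\mu_0)(\hat f_{n_1}-f_0)$ and $(\hat f_{n_1}-f_0)^2$, in addition to the usual $(\hat\mu_{n_2}-\mu_0)^2$ term. I would handle this through the thrice-differentiability of $m_2$ and uniform boundedness by $B$, bounding the remainder by $C\{\mathbb{P}|\hat\mu_{n_2}-\mu_0|^2+\mathbb{P}|\hat f_{n_1}-f_0|^2\}$ and applying Cauchy–Schwarz to the cross product; the $o_p(n^{-1/4})$ rates then make it $o_p(n^{-1/2})$. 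A secondary point is the equicontinuity in (a), because $\hat f_{n_1}$ enters. Since we condition on $S_1$, I would verify (a) by a Donsker or bracketing argument over the class $\{\partial_\mu m_2(\mu,f;\cdot)h:\mu\in\Theta_n\}$ with $f=\hat f_{n_1}$ held fixed, combined with $L_2$-continuity at $(\mu_0,f_0)$. Alternatively, (a) can simply be imposed as an assumption, as the paper does in Assumption \ref{as_pre}.
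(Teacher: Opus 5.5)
Your proposal is correct and follows essentially the same argument as the paper. The paper imposes your conditions (a)--(d) as (B1)--(B4), writes $\sqrt{n_2}\,Q_{n_2}(\hat{\mu}_{n_2},\hat{f}_{n_1})[h]$ as an empirical-process term plus a population term, replaces the former via stochastic equicontinuity, expands the latter via Fr\'echet differentiability and the second-order condition using $Q(\mu_0,f_0)[h]=0$, and rearranges. Your extra sketches for verifying the remainder bound (through third derivatives and $o_p(n^{-1/4})$ rates) and the equicontinuity (by conditioning on $S_1$) go beyond the paper, which simply assumes these conditions.
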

We emphasize that the regularity conditions for Proposition \ref{prop1} are used to derive the asymptotic functional normality of the working estimator $\hat{\mu}_{n_2}$.
These conditions are independent of the establishment of Neyman orthogonality.
Therefore, we omit the formal statement of these conditions here; a comprehensive list of regularity conditions and the detailed proof of Proposition \ref{prop1} are provided in the Appendix.

Subsequently, we derive the influence function of $\hat{\beta}_n$ and show that it serves as a valid Neyman-orthogonal score function for $\beta_0$. 
By Taylor expansion, we obtain
\begin{align*}
&-\sqrt{n}(\hat{\beta}_n  - \beta_0)\  \partial_{\beta} \mathbb{P}  \psi(\beta_0,\mu_0, f_0;W)   \nonumber \\
= & \underbrace{\sqrt{n}\ \partial_{f} \mathbb{P}   \psi(\beta_0,\mu_0,f_0;W)[\hat{f}_{n_1}-f_0] }_{A} +\underbrace{\sqrt{n}\ \partial_{\mu} \mathbb{P}   \psi(\beta_0,\mu_0,f_0;W)[\hat{\mu}_{n_2}-\mu_0]}_{B}\\
& +\sqrt{n} \  (\mathbb{P}_n - \mathbb{P}) \psi(\beta_0,\mu_0, f_0;W) + \text{residual}.
\end{align*}
We explore the influence functions of $A$ and $B$, respectively.
Under some conditions, adapting the proof of Proposition \ref{prop0}, we have that for arbitrary direction $h_1\in \mathcal{F}$,
$$-\sqrt{n}\ \partial^2_{ff} \mathbb{P} m_1(f_0;W)[\hat{f}_{n_1} - f_0][h_1] = \sqrt{n}\ (\mathbb{P}_{n_1} - \mathbb{P}) \{\partial_f m_1(f_0;W)h_1\} +  o_p(1).$$
Hence, choosing well-defined
$h_{10}(x) = -E[\partial_{f} \psi(\beta_0,\mu_0,f_0;W)\mid X=x] / E[\partial^2_{ff} m_1(f_0;W)\mid X=x]$,
we have
\begin{equation}
A = \sqrt{n}\ (\mathbb{P}_{n_1} - \mathbb{P}) \{\partial_f m_1(f_0;W)h_{10}\} +  o_p(1).\nonumber
\end{equation}
By Proposition \ref{prop1}, choosing

$h_{20}(x) = -E[\partial_{\mu} \psi(\beta_0,\mu_0,f_0;W)\mid X=x] / E[\partial^2_{\mu \mu} m_2(\mu_0,f_0;W)\mid X=x] $
and

$h_{30}(x) = -E[\partial^2_{\mu f} m_2(\mu_0,f_0;W)h_{20}\mid X=x] / E[\partial^2_{ff} m_1(f_0;W)\mid X=x]$, we conclude that
\begin{equation}
B = \sqrt{n}(\mathbb{P}_{n_1} - \mathbb{P}) \{\partial_f m_1(f_0;W)h_{30}\} + \sqrt{n}(\mathbb{P}_{n_2} - \mathbb{P}) \{\partial_{\mu} m_2(\mu_0,f_0;W)h_{20} \}+  o_p(1). \nonumber
\end{equation}
Therefore, we have
\begin{align*} 
&-\sqrt{n} \ (\hat{\beta}_n - \beta_0) \partial_{\beta} \mathbb{P}  \psi(\beta_0,\mu_0, f_0;W)\nonumber \\
= &\sqrt{n} \ (\mathbb{P}_{n_1} - \mathbb{P}) \{\partial_f m_1(f_0; W)(h_{10} + h_{30})\} + \sqrt{n} (\mathbb{P}_{n} - \mathbb{P}) \psi(\beta_0,\mu_0, f_0; W) \nonumber \\
& + \sqrt{n}\  (\mathbb{P}_{n_2} - \mathbb{P}) \{\partial_\mu m_2(\mu_0, f_0; W)h_{20}\} + o_p(1).
\end{align*}

Let 
\begin{align}
h_{10}(x) &= -E[\partial_f \psi(\beta_0,\mu_0, f_0; W) \mid X=x] / E[\partial^2_{ff} m_1(f_0; W) \mid X = x], \label{h1}\\
h_{20}(x) &= -E[\partial_\mu \psi(\beta_0,\mu_0, f_0; W) \mid X = x] / E[\partial^2_{\mu \mu} m_2(\mu_0, f_0; W) \mid X= x], \label{h2}\\
h_{30}(x) &= -E[\partial^2_{\mu f} m_2(\mu_0, f_0; W) h_{20}(X) \mid X = x] / E[\partial^2_{ff} m_1(f_0; W) \mid X= x] \label{h3}.
\end{align}

\begin{theorem}[Neyman Orthogonality in the semiparametric model with multiple nuisance functions] \label{co1}
Suppose the target parameter $\beta_0$ is identified via a score condition in \eqref{ori_No3}, and nuisance functions $(f_0, \mu_0)$ are independently characterized by \eqref{nuisance}. Assume the following conditions hold:

\begin{itemize}
\item[(a)] The score function $\psi$ in \eqref{ori_No3} is twice continuously differentiable, and the objective functions $m_1$ and $m_2$ in \eqref{nuisance} are thrice continuously differentiable with respect to their arguments.
\item[(b)] The functions $h_{10}$, $h_{20}$, and $h_{30}$ in \eqref{h1}--\eqref{h3} are well-defined.
\item[(c)] $\partial_{f} \mathbb{P} m_1(f_0; W)[h_1] = 0$ and $\partial_{\mu} \mathbb{P} m_2(\mu_0, f_0; W)[h_2] = 0$ for any directions $h_1 \in \mathcal{F}$ and $h_2 \in \Theta$.
\end{itemize}

Then, the following properties hold: 
\begin{itemize}
\item[(i)] The function 
\begin{align}\label{def_NO3}
&\psi^*(\beta ,\mu ,f ,h_1,h_2,h_3;w)
=\psi(\beta , \mu , f ; w) + \partial_f m_1(f ; w)(h_1 + h_3) + \partial_\mu m_2(\mu , f ; w)h_2 
\end{align}
serves as a valid score function for $\beta_0$ with nuisance functions $\theta_0 = (f_0, \mu_0, h_{10}, h_{20}, h_{30})$. 
Furthermore, it obeys the Neyman-orthogonal condition at the $(\beta_0, \theta_0)$ with respect to $\theta_0$.

\item[(ii)] Assume that estimators $(\hat{\theta}_{n_1},\hat{\theta}_{n_2},\hat{\beta}_{n_1},\hat{\beta}_{n_2})$ constructed in Algorithm \ref{alg:cross_fitting} based on $\psi^*$ in \eqref{def_NO3} are $o_p(n^{-1/4})$-consistent. 
Then, the output of Algorithm \ref{alg:cross_fitting}, denoted by $\hat{\beta}_n^{R}$, has the asymptotic expansion
$$-\sqrt{n}(\hat{\beta}_n^{R} - \beta_0)\partial_{\beta}\mathbb{P}\psi^*(\beta_0, \theta_0; W) = \sqrt{n}(\mathbb{P}_n-\mathbb{P})\psi^*(\beta_0, \theta_0; W) + o_p(1).$$
\end{itemize} 
\end{theorem}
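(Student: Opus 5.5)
Part (i) splits into validity and orthogonality, and part (ii) follows the usual cross-fitting argument.

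\emph{Validity.} By condition (c), $\mathbb{P}\{\partial_f m_1(f_0;W)h\}=0$ for every $h\in\mathcal{F}$. This applies in particular to $h=h_{10}+h_{30}$, which lies in $\mathcal{F}$ because it is bounded by (b). Likewise $\mathbb{P}\{\partial_\mu m_2(\mu_0,f_0;W)h_{20}\}=0$. Combined with \eqref{ori_No3}, this gives $\mathbb{P}\psi^*(\beta_0,\theta_0;W)=0$. Square integrability follows from boundedness of the nuisance class and the continuous differentiability in (a).

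\emph{Orthogonality.} I would compute the Gateaux derivative of $\mathbb{P}\psi^*(\beta_0,\theta;W)$ at $\theta_0$ separately in each coordinate direction and add the results, which is legitimate by linearity of the first derivative. There are three types of direction.
\begin{itemize}
\item[(1)] In the directions $h_1,h_2,h_3$ the score is linear, so the derivatives are $\mathbb{P}\{\partial_f m_1(f_0;W)(\delta_1+\delta_3)\}$ and $\mathbb{P}\{\partial_\mu m_2(\mu_0,f_0;W)\delta_2\}$. Both vanish by (c).
\item[(2)] In the direction $\mu-\mu_0=\delta_\mu$ the derivative is $\mathbb{P}[\{\partial_\mu\psi+\partial^2_{\mu\mu}m_2\,h_{20}\}\delta_\mu]$. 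Conditioning on $X$ and using \eqref{h2}, the inner conditional mean is $E[\partial_\mu\psi\mid X]+h_{20}(X)E[\partial^2_{\mu\mu}m_2\mid X]=0$. Here I use that $h_{20}$ and $\delta_\mu$ are functions of $X$.
\item[(3)] In the direction $f-f_0=\delta_f$ the derivative is
\[
\mathbb{P}\big[\{\partial_f\psi+\partial^2_{ff}m_1(h_{10}+h_{30})+\partial^2_{\mu f}m_2\,h_{20}\}\delta_f\big].
\]
After conditioning on $X$, the $h_{10}$ term cancels $E[\partial_f\psi\mid X]$ by \eqref{h1}, and the $h_{30}$ term cancels $E[\partial^2_{\mu f}m_2h_{20}\mid X]$ by \eqref{h3}.
\end{itemize}
The interchange of derivative and expectation is justified by the standing regularity assumption.

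\emph{Part (ii).} I follow the proof of Theorem~\ref{NO_score}(ii). For a fold $k$ with nuisance estimate $\hat\theta_{n_j}$ from the other fold, expand the estimating equation $\mathbb{P}_{n_k}\psi^*(\hat\beta_{n_k},\hat\theta_{n_j};W)=0$ around $(\beta_0,\theta_0)$ by a mean-value expansion in $\beta$. This produces three pieces.
\begin{itemize}
\item[(a)] The empirical-process term $\sqrt{n_k}(\mathbb{P}_{n_k}-\mathbb{P})\{\psi^*(\beta_0,\hat\theta_{n_j})-\psi^*(\beta_0,\theta_0)\}$. Conditionally on $S_j$ its summands are i.i.d. with mean zero, and their variance is bounded by $\mathbb{P}|\psi^*(\hat\theta)-\psi^*(\theta_0)|^2=o_p(1)$ by smoothness and consistency. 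Chebyshev's inequality then makes this term $o_p(1)$ without any Donsker condition.
\item[(b)] The bias term $\sqrt{n}\,\mathbb{P}\{\psi^*(\beta_0,\hat\theta)-\psi^*(\beta_0,\theta_0)\}$. Its first-order part vanishes by (i). A second-order Taylor expansion, using the third derivatives of $m_1,m_2$, second derivatives of $\psi$, and boundedness of all functions, bounds it by $C\sqrt{n}\|\hat\theta-\theta_0\|^2=o_p(1)$ under the $o_p(n^{-1/4})$ rate.
\item[(c)] The $\beta$-derivative term, which converges to $\partial_\beta\mathbb{P}\psi^*(\beta_0,\theta_0;W)$ by consistency of $\hat\beta_{n_k}$ and a uniform law of large numbers.
\end{itemize}
Averaging the two folds, each of size $n/2$, gives the stated expansion.

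\emph{Main obstacle.} The hardest step is the second-order remainder in (b). The product terms such as $(\hat h_{2}-h_{20})(\hat\mu-\mu_0)$ and $(\hat h_1+\hat h_3-h_{10}-h_{30})(\hat f-f_0)$, and the cross term $\partial^2_{\mu f}$ in $(\hat\mu-\mu_0)(\hat f-f_0)$, must be controlled by Cauchy--Schwarz in $L_2(P)$. This requires uniformly bounded third derivatives on a neighbourhood of the truth, which I would impose via (a) and the uniform bound $B$.
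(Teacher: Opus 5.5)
Your proposal is correct and follows essentially the route the paper intends; the paper omits this proof as analogous to that of Theorem~\ref{NO_score}. In both, the coordinatewise Gateaux derivatives vanish by the law of total expectation together with condition (c). Cross-fitting then gives a conditional Chebyshev bound for the nuisance empirical-process term, and orthogonality plus a second-order remainder controls the bias. The only deviation is that you handle the $\beta$-perturbation by a mean-value expansion of the sample estimating equation and a uniform law of large numbers for $\partial_\beta\psi^*$, instead of the paper's joint population Taylor expansion with Lemma 3.4.2 of \cite{vaart1996weak} for the $\beta$-part of the empirical process. To close that step you must show that the derivative error times $\sqrt{n}(\hat\beta_{n_k}-\beta_0)$ is negligible; this follows either from $\partial_\beta\mathbb{P}\psi^*(\beta_0,\theta_0;W)\neq 0$ (which gives $\sqrt{n}$-tightness) or from the assumed $o_p(n^{-1/4})$ rates of $\hat\beta_{n_k}$ and $\hat\theta_{n_j}$.
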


As proof of Theorem \ref{co1} is similar to that of Theorem \ref{NO_score}, it is omitted.

The asymptotic normality of the robust estimator $\hat{\beta}_n^R$ is guaranteed by the existence of $o_p(n^{-1/4})$-consistent estimator for the nuisance function $\theta_0$.
This condition is relatively mild in practice and can be achieved using powerful sieve approximation tools, such as deep neural networks or other machine learning methods.
To the best of our knowledge, this represents a novel approach for systematically achieving orthogonality within semiparametric models.

\section{Application to causal inference with a binary instrumental variable} \label{Sec4} 

In this section, we apply the proposed method to construct a Neyman-orthogonal score function for a causal effect of interest, thereby ensuring locally robust estimation and inference in the presence of a binary instrumental variable. 

The instrumental propensity score weighting method proposed by \cite{abadie2003semiparametric} is widely used to identify and estimate causal effects with a binary IV (\cite{qiu2021inference,sloczynski2025abadie}). 
In particular, this approach reweights the observed data distribution to recover the distribution of ``compliers''---those whose treatment status responds to the instrumental assignment, as defined by \cite{angrist1996identification}.
The corresponding estimation procedure relies on the instrumental propensity score, which is usually treated as an infinite-dimensional nuisance parameter. 
However, this classical weighting approach lacks robustness, making the causal effect estimator highly sensitive to first-step estimation errors in the nuisance function. 
To address this issue, our goal is to augment the weighting method by introducing a Neyman-orthogonal correction to the standard score functions.

\subsection{Notation and assumptions}\label{EOL} 
The data consist of $n$ observations on a response variable $Y$, a binary treatment indicator $D$, a binary instrument $Z$ and a $p$-dimensional covariate vector $X $.
Let $Y(z, d) $ be the potential outcome under $D=d$ and $Z=z$ for $d,z\in\{0,1\}$.
Denote $Y(0)$ and $Y(1)$ as the potential outcomes when $D=0$ and $D=1$, respectively.
Then, the observed response variable is $Y=Y(1)D+Y(0)(1-D)$.
Denote $D(0)$ and $D(1)$ as the potential treatment when $Z=0$ and $Z=1$, respectively.
The observed treatment assignment is $D=D(1)Z+D(0)(1-Z)$.
The population naturally partitions into four subgroups based on potential treatment behaviors: always-takers ($D(0) = D(1) = 1$), compliers ($D(1) > D(0)$), never-takers ($D(0) = D(1) = 0$), and defiers ($D(0) > D(1)$).
Let $W = (X, D, Y, Z)$ and $w = (x, d, y, z)$ denote a random vector and a realization, respectively.

Our goal is to identify and estimate the causal effect for the complier subpopulation under the following four standard assumptions (\cite{abadie2003semiparametric}):

(C1) Instrument independence: Conditional on $X$, the random vector

$(Y(0,0),Y(0,1),Y(1,0),Y(1,1),D(0),D(1))$ is independent of $Z$.

(C2) Exclusion restriction: $P(Y(1,d)=Y(0,d)\mid X)=1$ for $d\in\{0,1\}$.

(C3) First-stage relevance: $0<P(Z=1\mid X)<1$ and $P(D(1)=1\mid X)>P(D(0)=1\mid X)$.

(C4) Monotonicity: $P(D(1)\geq D(0)\mid X)=1$.

Assumption (C1) establishes the ignorability condition, ensuring that the instrumental variable $Z$ can be treated as effectively randomly assigned when conditioned on $X$.
(C2) enforces the exclusion restriction by specifying that potential outcomes depend solely on the treatment indicator $D$, allowing the potential outcomes $Y(D,Z)$ to be simplified to $Y(D)$. 
(C3) implies that variables $D$ and $Z$ remain correlated conditioning on $X$, while also satisfying the positivity for the propensity score.
(C4) excludes the existence of defiers in the population.

We focus on the following causal effect of interest:
$$\beta_0 := E[\{Y(1)-Y(0)\} \{D(1)>D(0)\}],$$
which captures the local average treatment effect (LATE, $E\big[Y(1) - Y(0) \mid D(1) > D(0)\big]$) for the complier subpopulation, weighted by the probability $P(D(1)>D(0))$ (\cite{angrist1995two}).

The local average response functions (LARFs) under treatment assignment $t\in\{0,1\}$ are defined as
\begin{align}\label{def_LARF}
\mu_0^{(t)}(x)=E[Y\mid X=x,D=t,D(1)> D(0)].
\end{align} 

Let
$g_0(X) = P(Z=1 \mid X)$
denote the instrumental propensity score. 
Under Assumption (C3), the corresponding log-odds $f_0(X)= \log\{g_0(X) / (1 - g_0(X))\}$ is well-defined. 
Furthermore, by employing the standard cross-entropy criterion within a nuisance function space $\mathcal{F}$, $f_0$ can be characterized as the minimizer of the following expected loss:
\begin{equation}\label{propensity}
f_0 = \argmin_{f\in\mathcal{F}} \mathbb{P}\big[\log\big\{1+e^{f(X)}\big\} - Zf(X)\big].
\end{equation}
The detailed specification of $\mathcal{F}$ will be discussed later in Remark \ref{re7}.
Subsequently, the instrumental propensity score can be recovered via the transformation, $g_0 = \operatorname{expit}(f_0)$, where $\operatorname{expit}(x) = e^x / (1 + e^x)$ is the sigmoid function.

\subsection{Identification of the causal effect}
In this section, we present the identification and estimation strategy for $\beta_0$. 

As established by \cite{abadie2003semiparametric}, these functions are identifiable under Assumptions (C1)–(C4). 
Therefore, the identification of $\beta_0$ can be achieved through the LARFs in \eqref{def_LARF}, expressed as the following calculation:
\begin{align}\label{ori_score}
\beta_0 &= E\{E[Y(1) - Y(0) \mid X, D(1)> D(0)]\mid D(1) > D(0)\} \cdot P(D(1) > D(0)) \nonumber \\ 
&=E [\mu_0^{(1)}(X)-\mu_0^{(0)}(X)\mid D(1)>D(0)] \cdot P(D(1) > D(0)).
\end{align}
Adapting Abadie's kappa weighting (\cite{abadie2003semiparametric}), we have the following proposition:
\begin{proposition}\label{prop2}
Define 
\begin{align*}
&\kappa^{(0)}(X,D,Z;g)=(1-D)\frac{(1-Z)-\{1-g(X)\}}{\{1-g(X)\}g(X)},
\\
&\kappa^{(1)}(X,D,Z;g)=D\frac{Z-g(X)}{\{1-g(X)\}g(X)}.
\end{align*}
Let $\kappa^{(1)}_0 = \kappa^{(1)}(X,D,Z;g_0)$ and $\kappa^{(0)}_0 = \kappa^{(0)}(X,D,Z;g_0)$, where $g_0$ is the instrumental propensity score.

Let $h$ be any measurable real function of $(Y,X)$, such that $E|h(Y, X)| < \infty$.
Under Assumptions (C1)-(C4), 
\begin{align*}
& E[h(Y(0),X) \mid D(1)>D(0)] = \frac{1}{P(D(1)>D(0))} E[\kappa^{(0)}_0 h(Y,X)] , \\
& E[h(Y(1),X) \mid D(1)>D(0)] = \frac{1}{P(D(1)>D(0))} E[\kappa^{(1)}_0 h(Y,X)].
\end{align*}
\end{proposition}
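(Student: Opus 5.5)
The plan is to prove the $\kappa^{(1)}$ identity in full; the $\kappa^{(0)}$ identity follows by the symmetric argument, swapping $D\leftrightarrow 1-D$ and $Z\leftrightarrow 1-Z$. The approach is to condition on $X$, use (C1) to separate $Z$ from the potential quantities, and use (C4) to partition the population into always-takers, compliers and never-takers. Write $C=\{D(1)>D(0)\}$, $A=\{D(0)=D(1)=1\}$ and $N=\{D(0)=D(1)=0\}$. By (C4), $P(A\cup C\cup N\mid X)=1$. By (C2), $Y=Y(D)$, so $DY=D\,Y(1)$ and $h(Y,X)D=h(Y(1),X)D$.

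First I compute $E[\kappa^{(1)}_0 h(Y,X)\mid X]$. Since $D=ZD(1)+(1-Z)D(0)$, we have
\[
D\{Z-g_0(X)\}=ZD(1)\{1-g_0(X)\}-(1-Z)D(0)g_0(X).
\]
Multiplying by $h(Y(1),X)$ gives
\[
E\bigl[ZD(1)h(Y(1),X)\mid X\bigr]=g_0(X)\,E\bigl[D(1)h(Y(1),X)\mid X\bigr],
\]
using (C1), which says $Z$ is independent of $(Y(1),D(1))$ given $X$. In the same way,
\[
E\bigl[(1-Z)D(0)h(Y(1),X)\mid X\bigr]=\{1-g_0(X)\}\,E\bigl[D(0)h(Y(1),X)\mid X\bigr].
\]
Dividing by $g_0(1-g_0)$, which is positive by (C3), yields
\[
E\bigl[\kappa^{(1)}_0 h(Y,X)\mid X\bigr]=E\bigl[\{D(1)-D(0)\}h(Y(1),X)\mid X\bigr].
\]
By (C4), $D(1)-D(0)=1_C$ almost surely. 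Taking expectations over $X$ gives
\[
E\bigl[\kappa^{(1)}_0 h(Y,X)\bigr]=E\bigl[1_C\,h(Y(1),X)\bigr]=P(C)\,E\bigl[h(Y(1),X)\mid C\bigr],
\]
and $P(C)>0$ follows from (C3).

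For $\kappa^{(0)}$, note that $(1-Z)-(1-g_0)=g_0-Z$. Using $1-D=Z\{1-D(1)\}+(1-Z)\{1-D(0)\}$ and $Y(1-D)=Y(0)(1-D)$, the same computation gives
\[
E\bigl[\kappa^{(0)}_0 h(Y,X)\mid X\bigr]=E\bigl[\{(1-D(0))-(1-D(1))\}h(Y(0),X)\mid X\bigr]=E\bigl[1_C\,h(Y(0),X)\mid X\bigr].
\]

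The main obstacle is largely bookkeeping. The key step is writing $D\{Z-g_0(X)\}$ in terms of $D(1)$ and $D(0)$, so that the $Z$-dependence factors out under (C1). Integrability also has to be handled. On the event where $D=1$, the weight $\kappa^{(1)}_0$ is bounded, provided $g_0$ is bounded away from $0$ and $1$. Otherwise I would verify finiteness via the conditional computation above, using $E|h|<\infty$, which is the stated hypothesis.
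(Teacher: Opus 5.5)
The paper gives no proof of this proposition; it imports it from \cite{abadie2003semiparametric}. Your argument is a correct, self-contained version of the standard derivation, so it supplies what the paper leaves to the citation. The steps all check out:
the identities $D\{Z-g_0(X)\}=ZD(1)\{1-g_0(X)\}-(1-Z)D(0)g_0(X)$ and $(1-D)\{g_0(X)-Z\}=(1-Z)\{1-D(0)\}g_0(X)-Z\{1-D(1)\}\{1-g_0(X)\}$; factoring $Z$ out of the conditional expectations via (C1)--(C2); and $D(1)-D(0)=1_C$ via (C4).

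The one step that fails is the fallback in your last paragraph. If $g_0$ is only known to lie in $(0,1)$, finiteness cannot be deduced from $E|h(Y,X)|<\infty$. Take $X\sim\mathrm{Uniform}(0,1)$ and $g_0(X)=X$. Let everyone be a complier with $Y(1)=1$ and $Y(0)=0$, so $Y=D=Z$, and take $h(y,x)=y/x$. Then $E|h(Y,X)|=E[g_0(X)/X]=1$. Yet $\kappa^{(1)}_0h(Y,X)=Z/X^2$ and $h(Y(1),X)=1/X$ both have expectation $E[1/X]=\infty$. Multiplying $h$ by a sign that alternates over the intervals $(2^{-k-1},2^{-k}]$ makes both sides of the identity undefined. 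This is a defect of the hypothesis as stated, not of your method, and there are two easy repairs.

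\emph{Strong overlap.} If $\epsilon\le g_0\le 1-\epsilon$, then $|\kappa^{(t)}_0|\le\epsilon^{-1}$. The inequality $E[D|h(Y,X)|\mid X]\ge g_0(X)\,E[1_C|h(Y(1),X)|\mid X]$, and its analogue for $1-D$, then gives $E[1_C|h(Y(t),X)|]\le\epsilon^{-1}E|h(Y,X)|$.

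\emph{Nonnegative $h$ first.} Without strong overlap, run your computation for $h\ge 0$ using Tonelli. Your two terms become $E[D(1)h(Y(1),X)]$ and $E[D(0)h(Y(1),X)]$ in $[0,\infty]$. The subtracted one is at most $E\,h(Y,X)<\infty$, because $D(0)\le D$ under (C4) (respectively $1-D(1)\le 1-D$ for $\kappa^{(0)}_0$). So the identity holds in $[0,\infty]$. For signed $h$ it then follows by splitting $h=h^+-h^-$, provided $E[1_C|h(Y(t),X)|]<\infty$.
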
  
\noindent Based on the kappa weighting in Proposition \ref{prop2}, we obtain two fundamental results for the identification of the LARFs and the causal effect $\beta_0$.
\begin{itemize}
\item[(i)] For $t\in\{0,1\}$ and the nuisance function space $\Theta$ that contains $\mu_0^{(0)}$ and $\mu_0^{(1)}$, LARFs can be characterized by the nonparametric re-weighted least squares criterion:
\begin{align}\label{LARF}
\mu_0^{(t)} =&\argmin_{\mu\in\Theta} E[\{Y(t)-\mu(X)\}^2\mid D(1)>D(0)] 
= \argmin_{\mu\in \Theta} \mathbb{P}\kappa^{(t)}_0\{Y-\mu(X)\}^2.
\end{align}

\item[(ii)] A valid score function of $\beta_0$ derived from \eqref{ori_score} is 
\begin{align}\label{ori}
\psi(\beta,g,\mu^{(1)},\mu^{(0)};w) =  \kappa^{(1)}(x,d,z;g)\mu^{(1)} - \kappa^{(0)}(x,d,z;g)\mu^{(0)} - \beta
\end{align}
with nuisance functions $(g_0, \mu_0^{(1)}, \mu_0^{(0)})$.
\end{itemize} 

\subsection{Locally robust estimation and inference for $\beta_0$}  \label{LS}
This section provides a locally robust estimation and inference method for $\beta_0$ based on the Neyman-orthogonal score function.

Motivated by \eqref{LARF} and \eqref{ori}, we treat the propensity score $g_0$ as the first-stage nuisance function. 
The semiparametric structure defined as \eqref{propensity}--\eqref{ori} aligns with the general approach discussed in the Subsection \ref{SI1}.
Therefore, we apply Theorem \ref{co1} to establish a Neyman-orthogonal score function for $\beta_0$ as follows:
\begin{align}\label{robust1}
\psi^*(\beta , f , h ;w)
= & \{\kappa^{(1)}(x,d,z; \operatorname{expit}(f)) - \kappa^{(0)}(x,d,z; \operatorname{expit}(f))\}y \nonumber \\
& - \frac{\operatorname{expit}(f) -z}{\operatorname{expit}(f) \{1- \operatorname{expit}(f) \}}h - \beta,
\end{align}
where the nuisance vector $\theta_0 = (f_0, h_0)$ involves the nuisance function $h_0$, defined as:
\begin{equation}\label{h_0}
h_0(x)=E\bigg[Y\bigg\{ \frac{e^{2f_0(X)}-1}{e^{f_0(X)}}Z - e^{f_0(X)} \bigg\}\bigg|X=x\bigg].  
\end{equation} 

Before continuing the discussion, we make a remark that is important for understanding the proposed Neyman-orthogonal score function.

\begin{remark}
The specification of the score function for $\beta_0$ is not unique.
In particular, different from \eqref{ori},  Proposition \ref{prop2} implies another valid score function for $\beta_0$:
\begin{align}\label{moment}
\psi^{Moment}(\beta,g;w) = \{\kappa^{(1)}(x,d,z;g) - \kappa^{(0)}(x,d,z;g)\} y - \beta.
\end{align}
In the presence of a single nuisance function $g_0$ characterized by \eqref{propensity}, we can apply Theorem \ref{NO2_score} to construct a Neyman-orthogonal score function for $\beta_0$. 
Algebraic calculation shows that the Neyman-orthogonal augmentation of \eqref{moment} coincides with $\psi^*$, which is shown as \eqref{robust1} with nuisance functions $\theta_0$.
This implies that different original score functions may be represented by the common Neyman-orthogonal score function.
Furthermore, since these proposed Neyman-orthogonal score functions serve as the influence functions for respective estimators, under some conditions, the estimators of $\beta_0$ derived from $\psi$, $\psi^{Moment}$, and $\psi^*$ are asymptotically equivalent and share one asymptotic variance.
\end{remark}

Denote the robust estimator of $\beta_0$ by $\hat{\beta}^R_n$, which is obtained via implementing Algorithm \ref{alg:cross_fitting} with the score function $\psi^*$ in \eqref{robust1} and nuisance functions, $ f_0 $ in \eqref{propensity} and $h_0$ in \eqref{h_0}. 

\begin{theorem}\label{theo:Causal_inf}
(i). The score function $\psi^*$ obeys the Neyman-orthogonal condition at $(\beta_0,f_0,h_0)$ with respect to $\theta_0$.

(ii). If nuisance and target parameter estimators constructed in Algorithm \ref{alg:cross_fitting} with $\psi^*$, $\hat{\theta}_{n_1} = (\hat{f}_{n_1}, \hat{h}_{n_1}) $, $\hat{\theta}_{n_2}= (\hat{f}_{n_2}, \hat{h}_{n_2})  $, $\hat{\beta}_{n_1}$ and $\hat{\beta}_{n_2}$ are $o_p(n^{-1/4})$-consistent, then the estimator $\hat{\beta}_n^{R} $ is asymptotically normal:
$$\sqrt{n}(\hat{\beta}_n^R  - \beta_0)
\rightarrow_{d} N(0,\sigma^2),$$
as $n\rightarrow\infty$, where
\begin{equation}
\sigma^2= \  E \bigg[\kappa^{(1)}_0Y- \kappa^{(0)}_0Y - \frac{g_0(X)-Z}{g_0(X)\{ 1-g_0(X)\} }h_0(X) - \beta_0\bigg]^2. \nonumber
\end{equation}
\end{theorem}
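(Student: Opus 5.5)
Part (i) will be proved by a direct Gateaux computation on the explicit score \eqref{robust1}, not by invoking Theorem \ref{co1} abstractly. Write $g=\operatorname{expit}(f)$, so $dg/df=g(1-g)$, and rewrite $\kappa^{(1)}-\kappa^{(0)}$ as a function of $f$:
\[
\kappa^{(1)}-\kappa^{(0)}=\frac{D(Z-g)-(1-D)(g-Z)}{g(1-g)}=\frac{Z-g}{g(1-g)}.
\]
This uses $(1-Z)-(1-g)=g-Z$. In terms of $f$,
\[
\frac{Z-g}{g(1-g)}=Ze^{-f}(1+e^{f})-(1-Z)(1+e^{f})=Z(1+e^{-f})-(1-Z)(1+e^{f}).
\]
So $\psi^*=\{Z(1+e^{-f})-(1-Z)(1+e^{f})\}y+\{Z(1+e^{-f})-(1-Z)(1+e^{f})\}h-\beta$, because $-(g-z)/\{g(1-g)\}$ is the same factor.

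The derivative in the direction of $h$ is then immediate. It equals $\mathbb{P}[\{Z(1+e^{-f_0})-(1-Z)(1+e^{f_0})\}\delta_h(X)]$. Conditioning on $X$ and using $E[Z\mid X]=g_0$, the inner conditional expectation is $g_0/g_0-(1-g_0)/(1-g_0)=0$. This is where condition \eqref{propensity} enters, through the first-order condition for the cross-entropy risk.

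The derivative in the direction of $f$ is $\mathbb{P}[\{-Ze^{-f_0}-(1-Z)e^{f_0}\}(Y+h_0(X))\delta_f(X)]$. Orthogonality requires $E[(Ze^{-f_0}+(1-Z)e^{f_0})(Y+h_0)\mid X]=0$. Since $E[Ze^{-f_0}+(1-Z)e^{f_0}\mid X]=(1-g_0)+g_0=1$, this forces
\[
h_0(X)=-E[\{Ze^{-f_0}+(1-Z)e^{f_0}\}Y\mid X]=E[Y\{Z(e^{f_0}-e^{-f_0})-e^{f_0}\}\mid X].
\]
This matches \eqref{h_0}. Verifying this identity is the crux of part (i), and it is purely algebraic. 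I will also check the validity condition $\mathbb{P}\psi^*(\beta_0,\theta_0;W)=0$. The correction term has mean zero by the $h$-computation, and Proposition \ref{prop2} with $h(Y,X)=Y$ gives $E[(\kappa^{(1)}_0-\kappa^{(0)}_0)Y]=E[\{Y(1)-Y(0)\}I(D(1)>D(0))]=\beta_0$. Finite variance follows from boundedness of $\mathcal F$ (so $g_0$ is bounded away from $0$ and $1$) and a second moment on $Y$.

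For part (ii), I will follow the proof of Theorem \ref{NO_score}(ii). Because $\psi^*$ is linear in $\beta$ with $\partial_\beta\psi^*=-1$, each fold's estimator is explicit:
\[
\hat\beta_{n_k}-\beta_0=\mathbb{P}_{n_k}\psi^*(\beta_0,\hat\theta_{n_j}).
\]
I will decompose this as $(\mathbb{P}_{n_k}-\mathbb{P})\psi^*(\beta_0,\theta_0)$ plus an empirical-process remainder $(\mathbb{P}_{n_k}-\mathbb{P})\{\psi^*(\beta_0,\hat\theta_{n_j})-\psi^*(\beta_0,\theta_0)\}$ plus a bias term $\mathbb{P}\psi^*(\beta_0,\hat\theta_{n_j})$.

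The remainder is handled by conditioning on $S_j$. Given $S_j$, the remainder is a centered i.i.d. average, so its conditional variance is at most $n_k^{-1}\mathbb{P}|\psi^*(\hat\theta)-\psi^*(\theta_0)|^2=o_p(n^{-1})$. This uses the Lipschitz dependence on $(f,h)$ given bounded $f$ and bounded moments of $Y$, and Chebyshev gives $o_p(n^{-1/2})$. No Donsker condition is needed.

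The bias term is the main obstacle. I will take a second-order Taylor expansion in $\theta$ around $\theta_0$. The first-order term vanishes by part (i), and the second-order term is bounded by $C\{\|\hat f-f_0\|^2_{L_2}+\|\hat f-f_0\|\,\|\hat h-h_0\|\}$. The Hessian in $h$ is zero, and the cross term is controlled by Cauchy--Schwarz, which needs $E[Y^2\mid X]$ to be bounded. With the $o_p(n^{-1/4})$ rates this remainder is $o_p(n^{-1/2})$.

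Averaging the two folds with $n_1=n_2=n/2$ gives $\hat\beta^R_n-\beta_0=(\mathbb{P}_n-\mathbb{P})\psi^*(\beta_0,\theta_0)+o_p(n^{-1/2})$. The Lindeberg--L\'evy CLT then gives asymptotic normality with variance $\sigma^2=E[\psi^*(\beta_0,\theta_0;W)^2]$, which is the stated expression.
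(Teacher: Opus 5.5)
Your argument is correct, but it is organized differently from the paper's. The paper gives no standalone proof of this theorem: despite the pointer, the appendix only proves Theorem \ref{NO_score} and the propositions. Implicitly, (i) comes from specializing the general construction of Theorem \ref{co1} or Theorem \ref{NO2_score} (via $\psi^{Moment}$, as in the remark preceding the theorem). Part (ii) comes from rerunning the proof of Theorem \ref{NO_score}(ii), which uses:
\begin{itemize}
\item the expansion \eqref{Normal};
\item a conditional Chebyshev bound for the nuisance empirical-process term $A$;
\item a maximal inequality (Lemma 3.4.2 of \cite{vaart1996weak}) for the $\beta$-increment term $B$, relying on the assumed $o_p(n^{-1/4})$ rate of $\hat\beta_{n_k}$;
\item a crude remainder $O_p(\|\hat f-f_0\|_2^2+\|\hat h-h_0\|_2^2+|\hat\beta-\beta_0|^2)$.
\end{itemize}

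Your identity $\kappa^{(1)}-\kappa^{(0)}=(Z-g)/\{g(1-g)\}$ collapses the score to $\psi^*=\frac{Z-g}{g(1-g)}(Y+h)-\beta$. Part (i) then becomes a two-line conditional-expectation check. That check also verifies the closed form \eqref{h_0} directly. It further confirms the paper's choice of attaching $h$ to $(z-g)/\{g(1-g)\}$ rather than to $\partial_f m_1=g-z$, which the general theorems cover only after an $f$-dependent reparametrization of the auxiliary nuisance.

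In (ii), linearity in $\beta$ makes $\hat\beta_{n_k}$ explicit, so $B$ vanishes identically and the rate hypothesis on $\hat\beta_{n_k}$ is not needed. Because $\psi^*$ is also linear in $h$, your bias bound $C\{\|\hat f-f_0\|_2^2+\|\hat f-f_0\|_2\|\hat h-h_0\|_2\}$ shows that a product-rate condition, plus consistency of $\hat h$, already suffices. This is sharper than what the paper's template delivers. In exchange, the paper's route is more general: it applies verbatim to scores that are nonlinear in $\beta$ and to nuisances defined by arbitrary $M$-estimation problems.

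One small correction concerns the bounded conditional second moment of $Y$ that you rightly add; the paper uses it only tacitly, under ``smoothness and uniform boundedness''. It is needed for the conditional-variance step, not for the cross term of the bias. That cross term has conditional coefficient $E[-Ze^{-f_0}-(1-Z)e^{f_0}\mid X]=-1$ and involves no $Y$. The $ff$ term needs only $\sup_x E[|Y|\mid X=x]<\infty$.
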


The proof of Theorem \ref{theo:Causal_inf} is given in the Appendix.

\begin{remark}[Convergence rates of estimators]\label{re7}
Given that all estimators are $o_p(n^{-1/4})$-consistent, the asymptotic normality of the robust estimator can be achieved and does not depend on the method used to construct the nuisance parameter estimators.
When the covariate vector is $p$-dimensional, taking $(\hat{\theta}_{n_1},\hat{\beta}_{n_2})$ obtained in Algorithm \ref{alg:cross_fitting} for instance, we provide theoretical guidelines for constructing these $o_p(n^{-1/4})$-consistent estimators.

First, the $o_p(n^{-1/4})$-consistency of the log-odds estimator, $\hat{f}_{n_1}$, can be attained using ReLU deep neural networks (DNNs). 
We adopt the ReLU activation function, defined as $\rho(v) = \max(0, v)$. Assuming the hidden layers of the DNN have a constant width $H$, the DNN function $h: \mathbb{R}^{p} \to \mathbb{R}$ can be parameterized as:
\begin{align*}
h(x) = s_{\mathcal{D}} \rho \big( \cdots \rho(s_1 \rho(s_0 x + b_0) + b_1) \cdots \big) + b_{\mathcal{D}}.
\end{align*}
Here, $\mathcal{D}$ represents the number of hidden layers; $s_0 \in \mathbb{R}^{H \times p}$ and $b_0 \in \mathbb{R}^H$ map the $p$-dimensional input to the first hidden layer; for $l = 1, \dots, \mathcal{D}-1$, $s_l \in \mathbb{R}^{H \times H}$ and $b_l \in \mathbb{R}^H$ are the weight matrices and bias vectors connecting the hidden layers; and $s_{\mathcal{D}} \in \mathbb{R}^{1 \times H}$ and $b_{\mathcal{D}} \in \mathbb{R}^1$ yield the scalar output. 

Recent theoretical advances (\cite{schmidt2020nonparametric, farrell2021deep, jiao2023deep}) established the convergence rates of DNNs in general nonparametric M-estimation problems. Let $P_X$ denote the probability measure of $X$. In particular, assuming that $f_0$ belongs to a $\gamma$-H\"older ball with $\gamma \in\mathbb{N}_{+}$, and adapting Corollary 1 of \cite{farrell2021deep}, we obtain the following error bound under mild regularity conditions:
\begin{align*}
\int | \hat{f}_{n_1}(x) - f_0(x) |^2 \,dP_X(x) = O_p\bigg(n^{-\frac{2\gamma}{2\gamma+p}}\log^4 n\bigg).
\end{align*}
The above rate holds provided the network architecture satisfies $\mathcal{W}=O( n^{p/(2\gamma + p)}\log n)$ (where $\mathcal{W}$ is the total number of parameters, i.e., $\mathcal{W}=(p+1)H+(\mathcal{D}-1)(H^2 + H) + H + 1$) and $\mathcal{D}=O(\log n)$. 
Therefore, if $\gamma > p/2$, then $\hat{f}_{n_1}$ is $o_p(n^{-1/4})$-consistent.

Second, the estimator $\hat{h}_{n_1}$ is constructed via a two-stage approach: (i) first constructing the individual-level target variable $Y\{(e^{f_0(X)}-e^{-f_0(X)})Z - e^{f_0(X)}\}$; and (ii) regressing this target variable on the $X$. The overall convergence rate of $\hat{h}_{n_1}$ is governed by the slower rate between the estimators for $E[Y(Z-1) \mid X=x]$ and $f_0$. Therefore, following the aforementioned arguments, the $o_p(n^{-1/4})$-consistency of $\hat{h}_{n_1}$ can similarly be achieved utilizing DNN methods.

Third, the estimator $\hat{\beta}_{n_2}$ is obtained by solving the estimating equation with respect to $\beta$: $\mathbb{P}_{n_2} \psi^*(\beta ,\hat{f}_{n_1}, \hat{h}_{n_1};W) =0$. As $\beta$ is a scalar parameter and the score function $\psi^*$ satisfies standard smoothness conditions, the $o_p(n^{-1/4})$-consistency of $\hat{\beta}_{n_2}$ is directly derived from that of the nuisance vector $\hat{\theta}_{n_1}$.
\end{remark}

\section{Simulation studies}\label{simulation}
In this section, we evaluate the finite-sample properties of the proposed robust estimator of $\beta_0$ as defined in Section \ref{Sec4} via simulations.
All numerical calculations are implemented in \texttt{Python}.

The observed data $\{W_i\}_{i=1}^n=\{(X_i, D_i, Y_i, Z_i)\}_{i=1}^n$ were generated with sample sizes $n=500$, $1000$, $2000$ and $4000$ through the following data-generating process.
\begin{itemize}
\item For response $Y$ among compliers, we used the model
\begin{equation*}
Y(D)=D\mu^{(1)}_0(X)+(1-D)\mu^{(0)}_0(X)+\epsilon,
\end{equation*}
where $\epsilon$ is independent of $(X, D)$ and follows a standard normal distribution.
For always takers, we set $E[Y\mid X,D,D(1)=D(0)=1]=X_1+X_2+X_3+X_4+2D$,
and for never takers, we set $E[Y\mid X,D,D(1)=D(0)=0]=0.6X_1+0.8X_2+X_3+1.2X_4-2D$.

\item The $p$-dimensional covariate vector $X$ followed a normal distribution $N(\boldsymbol0, I_p)$ with each component truncated to the interval $[-1,1]$, where $p=4,\ 10$.

\item Given the covariate vector $X$ and the instrumental propensity score function $\operatorname{expit}\{f_0(x)\}$, we generated the binary instrumental variable $Z$ with $P(Z=1\mid X)=\operatorname{expit}\{f_0(X)\}$.

\item The treatment indicator $D$ was set as $D=I(U=1)+ZI( U=2)$.
Here, the latent variable $U$ was sampled from $\{1,2,3\}$ with probabilities $0.2, 0.6, 0.2$ respectively. This ensures that $20\%$ of the subjects are always takers ($U=1$), $60\%$ are compliers ($U=2$), and $20\%$ are never takers ($U=3$).
\end{itemize}

We considered the instrumental propensity score function setting:
$f_0(x)=x_1^2x_2^3+\log(x_2x_3+4)-\exp\{x_3x_4/2\}-0.5$
and two LARF scenarios:
\begin{itemize}
\item[S1] $\mu_0^{(t)}(x)=\cos(\pi x_1x_2)+x_1x_2x_3^3+\exp(x_2x_3-1)+\log(3+x_3x_4)+3t$ with $\beta_0 =1.8$;
\item[S2] $\mu_0^{(t)}(x)=\sin(\pi x_1x_2/2)+\log(x_2x_3+1.5)+\exp(x_3x_4/2)+3t$ with $\beta_0=1.8$.
\end{itemize}

Here we evaluated the finite-sample performance of three methods:

(i) Robust (R): The proposed method, as formalized in Theorem \ref{theo:Causal_inf};

(ii) Moment (M): The naive moment-based estimator, $E[(\kappa^{(1)} - \kappa^{(0)})Y] $;

(iii) Regression imputation (Reg): The regression imputation estimator, $ E[\kappa^{(1)}\mu^{(1)}(X) - \kappa^{(0)}\mu^{(0)}(X)]  $.

The 2-fold cross-fitting procedure outlined in Algorithm \ref{alg:cross_fitting} was applied. 
All evaluated methods rely on the estimation of the log-odds $f_0$. 
To ensure consistency, this was estimated via nonparametric logistic regression using ReLU deep neural networks (DNNs), as detailed in Remark \ref{re7}. 
With the exception of the moment-based method (M), which does not require additional nuisance functions, we employed both parametric and nonparametric approaches to estimate the remaining nuisance parameters, $h_0$ and $(\mu^{(0)}_0, \mu^{(1)}_0)$:
\begin{itemize}
\item For the robust method (R), the nuisance function $h_0$ was estimated using either a linear model (R-LR) or a DNN (R-NP), both trained by minimizing the ordinary least squares loss.
\item For regression imputation (Reg), the conditional means $(\mu^{(0)}_0, \mu^{(1)}_0)$ were estimated using either linear models (Reg-LR) or DNNs (Reg-NP), both trained by minimizing the kappa weighting least squares loss.
\end{itemize} 
For all nonparametric estimation tasks involving DNNs (i.e., estimating $f_0, h_0, \mu^{(0)}_0$, and $\mu^{(1)}_0$), we adopted the fully connected network construction described in \cite{farrell2021deep} and Remark \ref{re7}. 
In particular, the network architecture was fixed at $\mathcal{D}=4$ hidden layers, each with a constant width of $H=80$ and utilizing the ReLU activation function. 
The models were trained via the Adam optimizer (\cite{kingma2014adam}) with the learning rate held constant at $0.001$. 

Estimation performance was assessed over $r=500$ replications using the average absolute bias and the scaled mean squared error (SMSE), defined as:
\begin{align*}
\text{Bias} =  \bigg| \frac{1}{r} \sum_{j=1}^{r}(\hat{\beta}_j -\beta_0)\bigg|, \
\text{SMSE} =  \frac{\sqrt{n}}{r} \sum_{j=1}^{r}(\hat{\beta}_j -\beta_0)^2 ,
\end{align*}
where $\hat{\beta}_j$ denotes the estimate from the $j $-th replication.


Figures~\ref{fig2} and \ref{fig3} summarize the performance metrics, Bias and SMSE, of the five methods across different sample sizes and covariate dimensions for the two scenarios. 
Across all evaluated scenarios and covariate dimensions, increasing the sample size from $500$ to $4000$ consistently reduces the SMSE for all methods, reflecting decreased estimation variance and improved estimator precision.
Both robust methods, R-NP and R-LR, consistently exhibit the lowest SMSE across all sample sizes, scenarios, and covariate dimensions.
Their Biases are also low, though not always the smallest, indicating the strong robustness and superior accuracy.
In contrast, the moment method yields higher SMSE than the robust approaches but outperforms the regression imputation methods, which are highly sensitive to model misspecification.
Increasing the covariate dimension from $p=4$ to $10$ increases SMSE for all methods, particularly in smaller samples.
Nevertheless, the proposed robust estimators maintain their performance advantage, demonstrating resilience to increased dimensionality.
The similarity in Bias and SMSE between R-NP and R-LR suggests that the robust method is largely insensitive to whether $h_0$ is estimated parametrically or nonparametrically.
This robustness is particularly important, as R-LR remains reliable even under potential model misspecification, whereas the regression imputation methods need to be accurately specified to perform well.

Tables~\ref{lab1} and \ref{lab2} report 95\% coverage probabilities based on the estimated standard errors.
Different from other methods, the proposed estimators, R-NP and R-LR achieve coverage close to the nominal 95\%, providing robust inference despite potential misspecification of $h_0$.
Conversely, the competing methods are unable to achieve the nominal 95\% coverage due to elevated empirical standard errors of the corresponding estimators, illustrating that naive estimation methods combined with machine learning does not guarantee valid statistical inference.
Exhibiting minimal sensitivity to covariate dimensionality and model misspecification, the proposed robust estimators consistently outperform both moment- and regression-based methods across all simulation settings.

\begin{figure}[H]
\centering

\begin{subfigure}[b]{0.400\textwidth}
\centering
\includegraphics[width=\textwidth, height=3.5cm ]{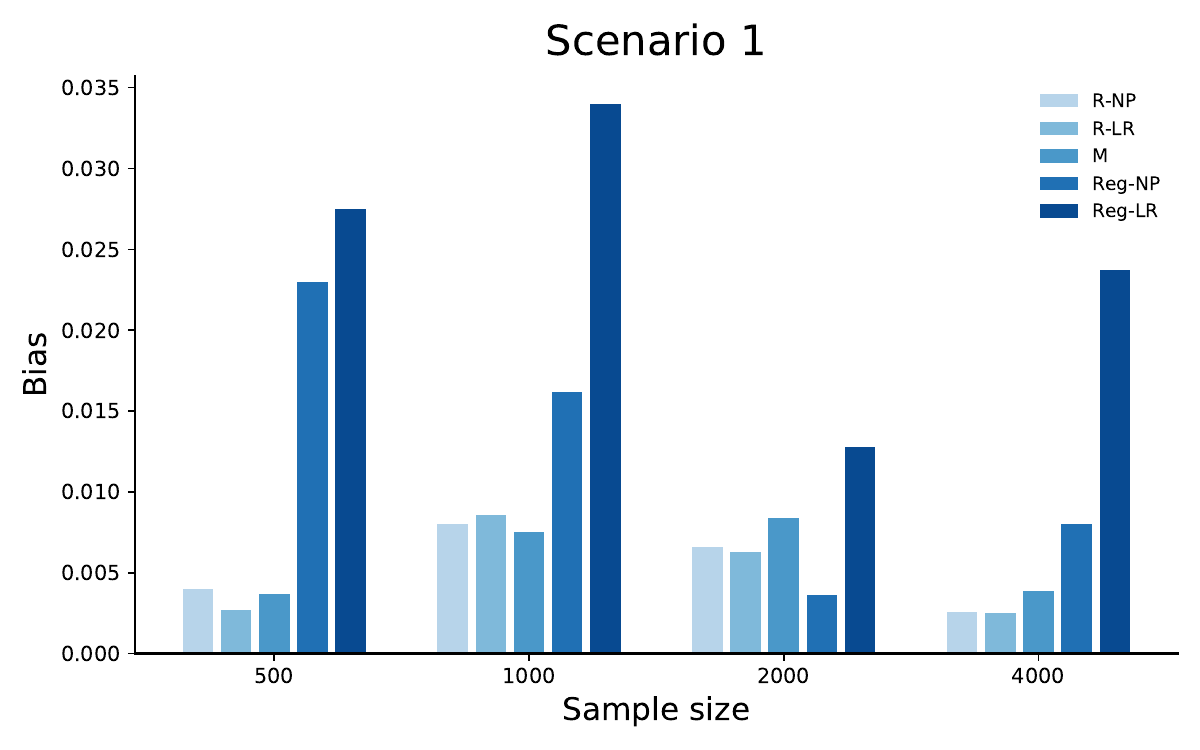}  
\end{subfigure} 
\begin{subfigure}[b]{0.400\textwidth}
\centering
\includegraphics[width=\textwidth, height=3.5cm ]{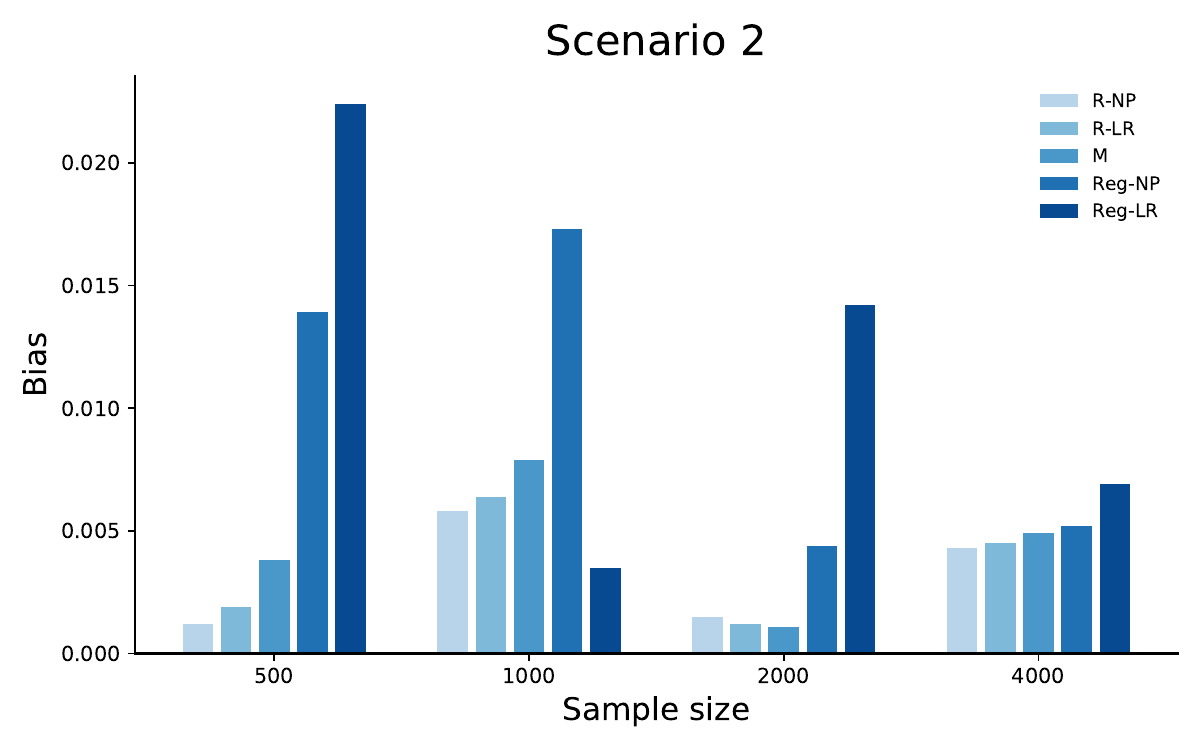} 
\end{subfigure} 

\begin{subfigure}[b]{0.400\textwidth}
\centering
\includegraphics[width=\textwidth, height=3.5cm ]{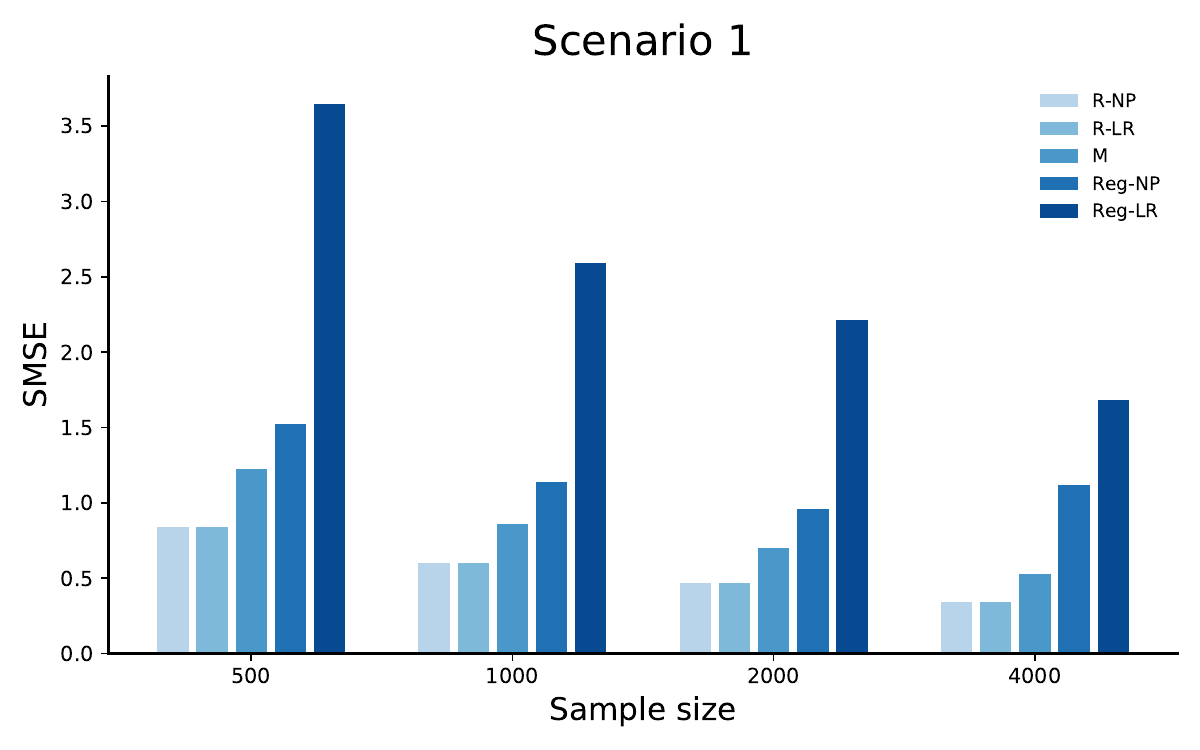} 
\end{subfigure} 
\begin{subfigure}[b]{0.400\textwidth}
\centering
\includegraphics[width=\textwidth, height=3.5cm ]{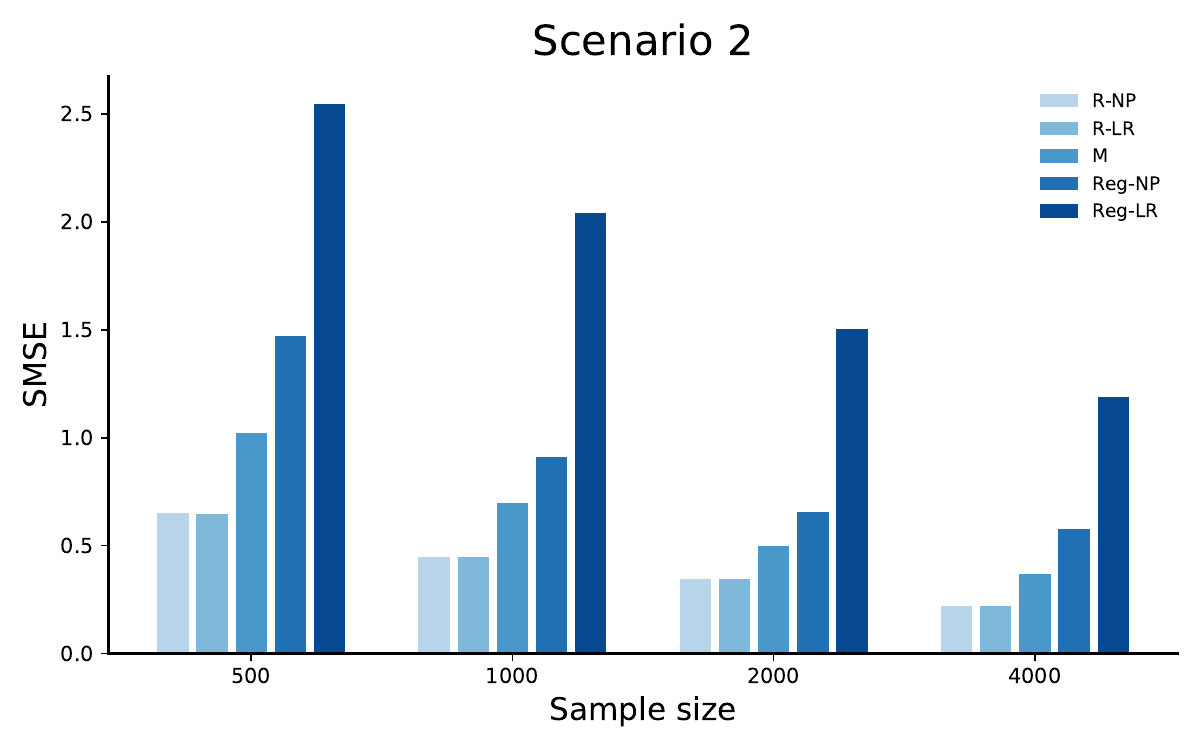}  
\end{subfigure}

\caption{Comparison of Bias and SMSE for five estimation methods with a covariate dimension of $p=4$ across sample sizes $n \in \{500, 1000, 2000, 4000\}$. 
The results are averaged over 500 replications. 
In each subfigure, the bars from left to right represent the metrics for the five methods: our proposed robust methods (R-NP and R-LR), the moment method (M), and the regression imputation methods (Reg-NP and Reg-LR).}
\label{fig2}
\end{figure}

\begin{figure}[H]
\centering

\begin{subfigure}[b]{0.400\textwidth}
\centering
\includegraphics[width=\textwidth, height=3.5cm ]{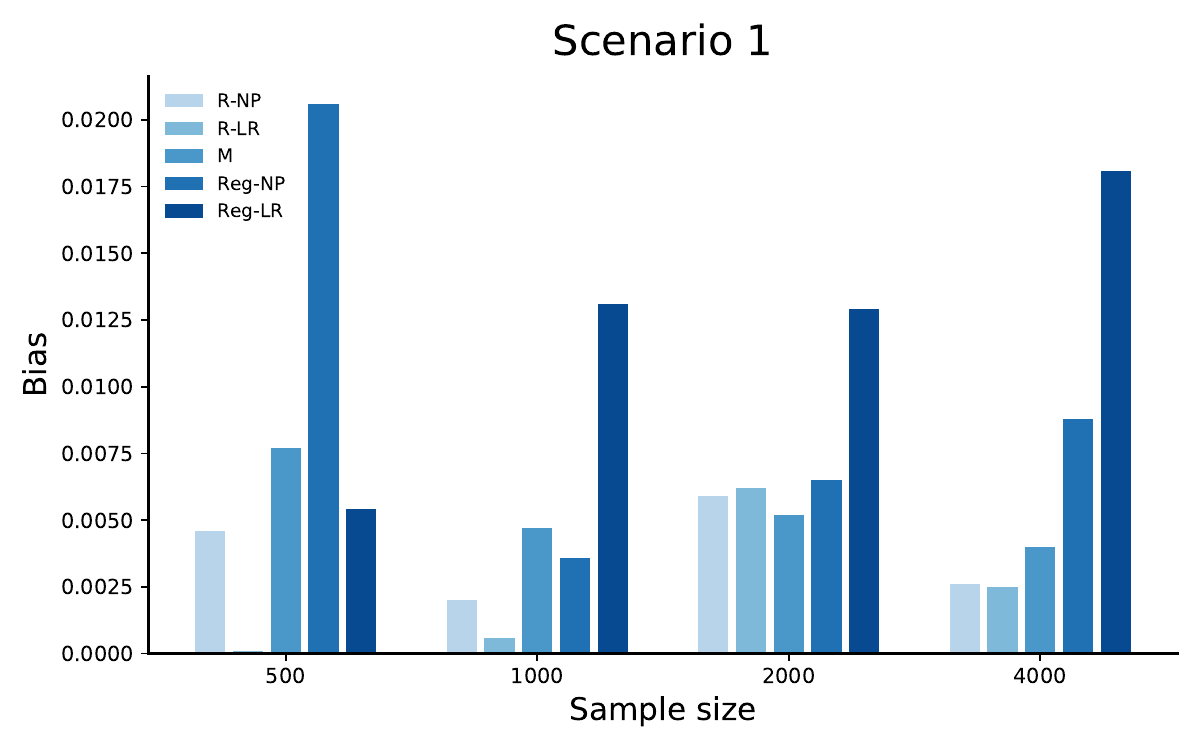}  
\end{subfigure} 
\begin{subfigure}[b]{0.400\textwidth}
\centering
\includegraphics[width=\textwidth, height=3.5cm ]{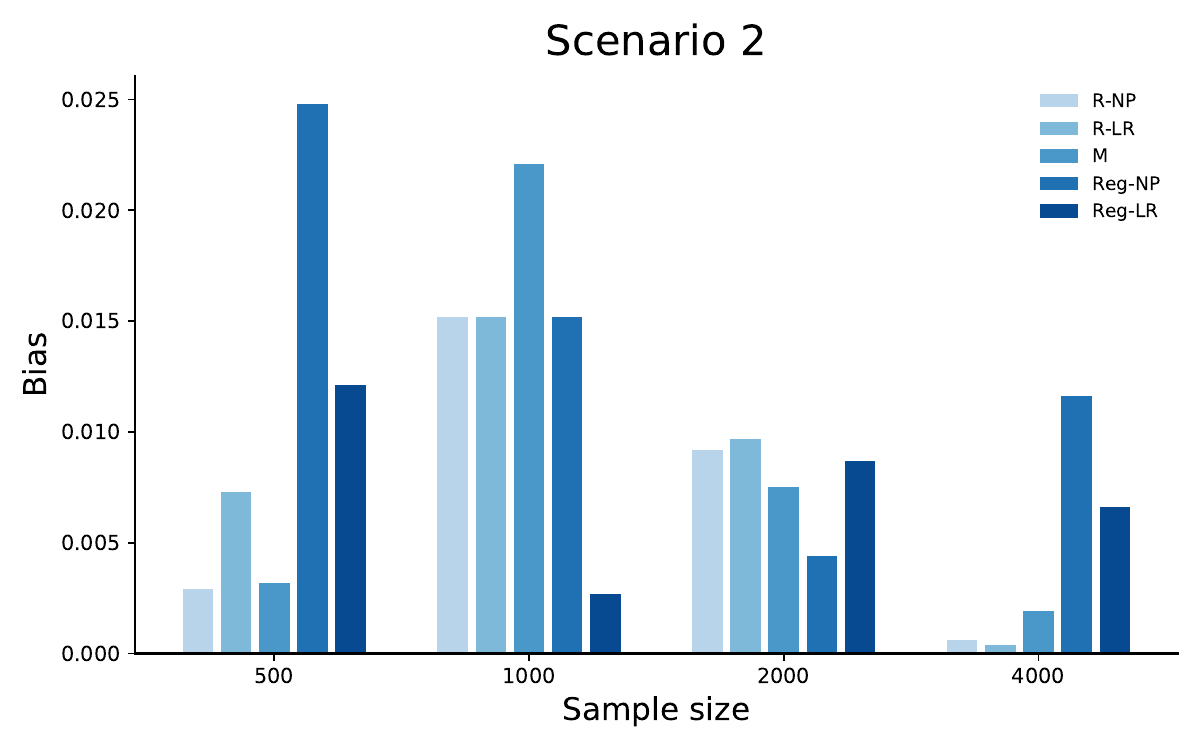}  
\end{subfigure} 

\begin{subfigure}[b]{0.4\textwidth}
\centering
\includegraphics[width=\textwidth, height=3.5cm ]{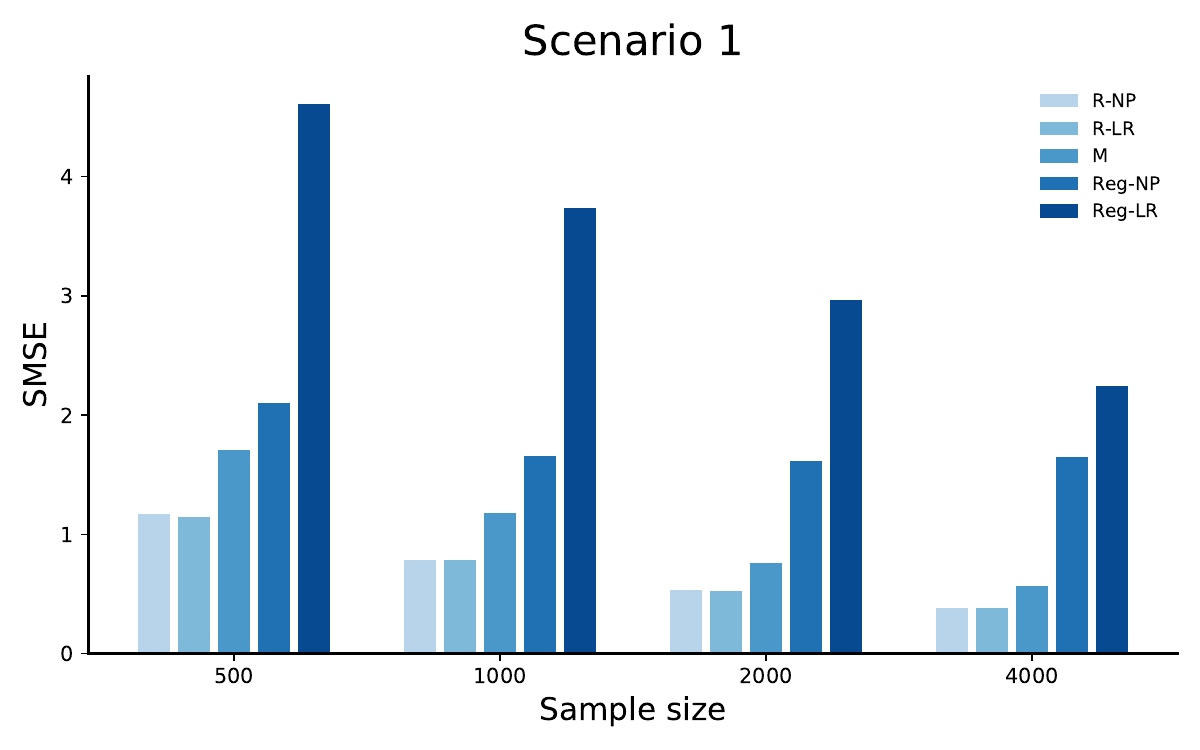} 
\end{subfigure} 
\begin{subfigure}[b]{0.400\textwidth}
\centering
\includegraphics[width=\textwidth, height=3.5cm ]{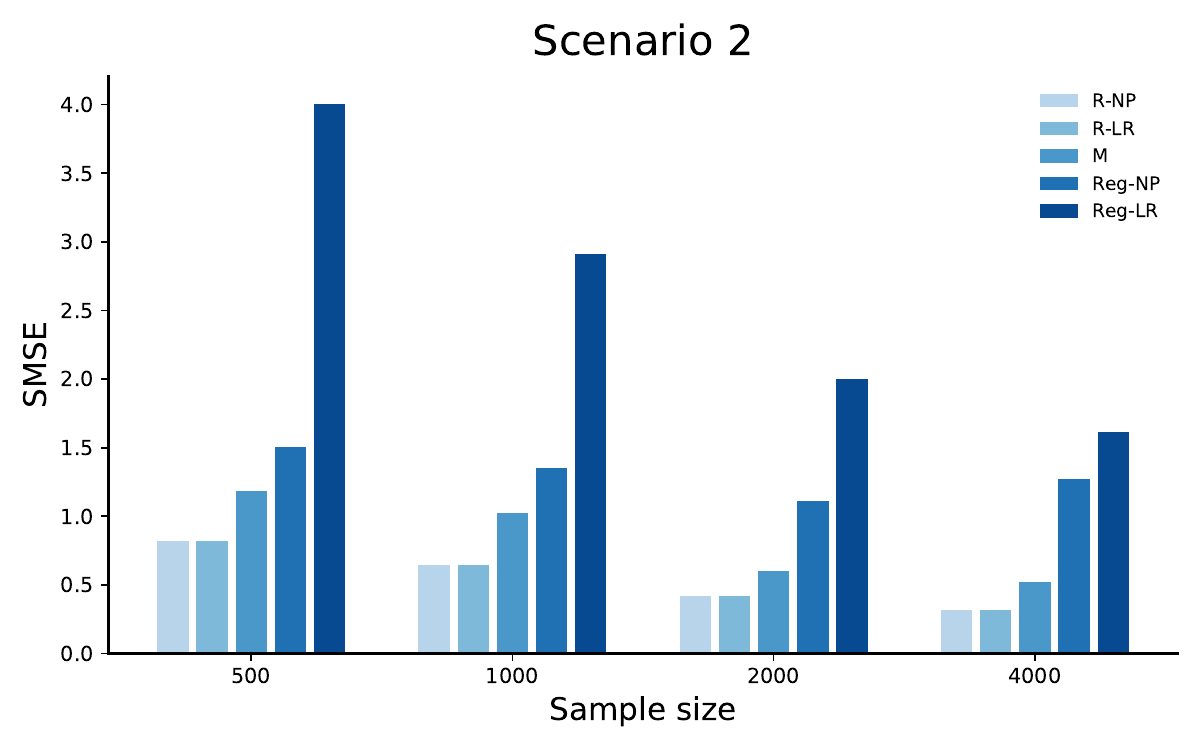}  
\end{subfigure}

\caption{Comparison of Bias and SMSE for five estimation methods with a covariate dimension of $p=10$ across sample sizes $n \in \{500, 1000, 2000, 4000\}$. 
The results are averaged over 500 replications. 
In each subfigure, the bars from left to right represent the metrics for the five methods: our proposed robust methods (R-NP and R-LR), the moment method (M), and the regression imputation methods (Reg-NP and Reg-LR).}
\label{fig3}
\end{figure}

\begin{table}[H]
\centering 
\caption{Comparison of coverage probabilities for five estimation methods with a covariate dimension of $p=4$ across sample sizes $n \in \{500, 1000, 2000, 4000\}$. The results are based on 500 replications.}
\resizebox{0.9\textwidth}{!}{
\begin{tabular}{@{}c ccccc | ccccc@{}} 
\hline
Sample Size & \multicolumn{5}{c}{Scenario 1} & \multicolumn{5}{c}{Scenario 2} \\
\cline{2-6} \cline{7-11}
& R-NP & R-LR & M & Reg-NP & Reg-LR & R-NP & R-LR & M & Reg-NP & Reg-LR \\
\hline
500  & 0.968 & 0.970 & 0.940 & 0.910 & 0.742 & 0.968 & 0.970 & 0.912 & 0.856 & 0.726 \\
1000 & 0.970 & 0.968 & 0.938 & 0.922 & 0.819 & 0.968 & 0.968 & 0.914 & 0.864 & 0.706 \\
2000 & 0.964 & 0.964 & 0.928 & 0.890 & 0.692 & 0.966 & 0.968 & 0.912 & 0.880 & 0.690 \\
4000 & 0.966 & 0.966 & 0.910 & 0.822 & 0.652 & 0.962 & 0.962 & 0.915 & 0.866 & 0.638 \\
\hline
\end{tabular}
}
\label{lab1}
\end{table}

\begin{table}[H]
\centering 
\caption{Comparison of coverage probabilities for five estimation methods with a covariate dimension of $p=10$ across sample sizes $n \in \{500, 1000, 2000, 4000\}$. The results are based on 500 replications.}
\resizebox{0.9\textwidth}{!}{
\begin{tabular}{@{}cccccc|cccccc@{}}
\hline
Sample Size & \multicolumn{5}{c}{Scenario 1} & \multicolumn{5}{c}{Scenario 2} \\
\cline{2-6} \cline{7-11}
& R-NP & R-LR & M & Reg-NP & Reg-LR & R-NP & R-LR & M & Reg-NP & Reg-LR \\
\hline
500  & 0.952 & 0.954 & 0.908 & 0.882 & 0.736 & 0.962 & 0.962 & 0.918 & 0.882 & 0.636 \\
1000 & 0.966 & 0.966 & 0.922 & 0.862 & 0.692 & 0.944 & 0.944 & 0.890 & 0.848 & 0.658 \\
2000 & 0.966 & 0.966 & 0.944 & 0.842 & 0.642 & 0.958 & 0.960 & 0.914 & 0.816 & 0.674 \\
4000 & 0.958 & 0.958 & 0.920 & 0.826 & 0.644 & 0.956 & 0.956 & 0.874 & 0.776 & 0.606 \\
\hline
\end{tabular}
}
\label{lab2}
\end{table}

\section{Case study: The Oregon Health Insurance Experiment}\label{Sec5}
The Oregon Health Insurance Experiment (OHIE) was established to evaluate the causal effects of expanding public health insurance (Medicaid) on healthcare utilization, health outcomes, financial strain, and overall well-being among low-income adults. 
Due to budget constraints, the state of Oregon could not accommodate all eligible individuals. 
To fairly allocate the limited number of available enrollment slots, officials established a waitlist and employed a randomized lottery system among the waitlisted applicants.

Let $Z \in \{0, 1\}$ denote the instrumental variable for randomized lottery selection, where $Z=1$ if a participant was selected to apply for Medicaid, and $Z=0$ otherwise. 
The selected participants could either apply for Medicaid coverage or forgo the opportunity.
Let $D \in \{0, 1\}$ denote the treatment indicator for Medicaid coverage, where $D = 1$ if the selected participant enrolled in the Medicaid program, and $D=0$ otherwise.  

After the experiment, researchers conducted in-person interviews with a subset of the OHIE participants to collect data on health insurance coverage, healthcare needs and experiences, physical measurements, and out-of-pocket medical costs (data accessible via https://www.nber.org/oregon/). 
The interview sample consists of 5,842 adults who were not selected ($Z = 0$) and 6,387 adults who were selected ($Z = 1$). 
Within the selected group, 1,957 individuals had successfully applied and enrolled in Medicaid ($D = 1$). 
Furthermore, the dataset includes a $p=32$-dimensional vector of controlled covariates ($X$), categorized into:
(i) demographic characteristics: financial burden, age, race, and education.
(ii) medical history.  
The randomized lottery ensures that the instrumental variable $Z$ is independent of the covariates $X$. 
After excluding records with missing responses, the final analytical sample comprises 12,141 individuals. 

To evaluate the impact of Medicaid, we focus on the causal effect of Medicaid coverage ($D$) on each of four distinct response variables ($Y$), including healthcare expenditures, clinical health indices, and subjective well-being: 
\begin{itemize}
\item Total medical spending (\textit{TMS}): A continuous measure representing aggregate medical expenditures, denominated in thousands of dollars. 
This variable captures the financial intensity of healthcare utilization.
\item Mental component score (\textit{MCS}): 
A continuous health-related quality-of-life metric. 
To ensure numerical stability, the score is scaled by a factor of $1/10$.
\item Physical component score (\textit{PCS}): 
A continuous metric assessing physical functional status. 
Similar to the \textit{MCS}, this variable is scaled by $1/10$.
\item Happiness (\textit{Hap}): 
A discrete, self-reported metric utilized as a proxy for the participants' overall subjective well-being.
\end{itemize}
Recall that we are interested in the causal effect $\beta_0= E \{(Y(1)-Y(0))(D(1)-D(0))\}$.

Previous studies utilized these data to evaluate the LATE of Medicaid on various socioeconomic and health outcomes.  
To investigate potential heterogeneous effects, we conducted subgroup analyses stratified by age into three distinct cohorts: 35 and under, 36 to 49, and 50 and older.
While \cite{qiu2021inference} also adopted a similar age-stratified analysis, our proposed approach differs from theirs.
In particular, \cite{qiu2021inference} employed a high-dimensional linear model, developing a debiased Lasso estimator based on the instrumental propensity score weighting method to infer the coefficients of Medicaid coverage. 
Their analysis revealed that the senior population with Medicaid coverage had lower out-of-pocket medical expenditures and higher levels of self-reported happiness. 
In contrast, we applied our proposed robust nonparametric method to separately examine the effects of Medicaid coverage on the four aforementioned outcomes across each age subgroup, thereby extending the evaluation beyond linear assumptions. 
Furthermore, to explore potential gender-specific heterogeneity, we conducted additional focused analyses on female and male participants, respectively. 

We applied the five estimation methods described in Section \ref{simulation} to quantify the causal effects of Medicaid coverage on the four outcomes by computing point estimates and 95\% confidence intervals for $\beta_0$. 
The estimation procedures and DNN configurations were identical to those detailed in Section \ref{simulation}. 
The results of our proposed R-NP approach are displayed in Figure \ref{fig1}, and the detailed results of five methods are summarized in Tables \ref{tab:pooled_results_by_age}, \ref{tab:female_results_by_age} and \ref{tab:male_results_by_age}.
The reported causal effects and their corresponding significance levels apply specifically to the subpopulation of compliers. 
Several important insights are listed below:

\begin{itemize}
\item[(i)] 
\textbf{Total medical spending (\textit{TMS}):}
Consistent with prior research (\cite{qiu2021inference}), our results confirm that Medicaid coverage led to a significant reduction in financial strain for senior groups (aged 50 and older). 
Moreover, we reveal evidence of decreased medication expenditures among younger adults (aged 35 and under) and female participants, reflecting substantial financial relief for these beneficiaries.

\item[(ii)] 
\textbf{Physical component score (\textit{PCS}):} 
The causal effect of Medicaid coverage on physical health outcomes was not significant across any of the evaluated age groups. 
This aligns with and broadens the findings of \cite{qiu2021inference}, who reported similar non-significant results specifically for older women.

\item[(iii)]  
\textbf{Mental component score (\textit{MCS}):}
We observe that Medicaid coverage yields significant improvements in mental health among middle-aged women (36–49 years).
This benefit was not identified in previous analyses (\cite{baicker2013oregon, qiu2021inference}).

\item[(iv)]  
\textbf{Happiness (\textit{Hap}):} 
The effect of Medicaid coverage on self-reported happiness was significant for both middle-aged women and the pooled female cohort. 
This contrasts with \cite{qiu2021inference}, who found no significant effects across age and gender subgroups. 
This discrepancy is possibly attributable to the robustness of our proposed method to the complexities of nuisance function estimation.

\item[(v)]
\textbf{Male subgroups:}
No significant effects can be found in male subgroups, as shown in Panel (c) of Figure~\ref{fig1}.

\item[(vi)] 
\textbf{Methodological comparisons:}
The point estimates and 95\% confidence intervals for four aforementioned outcomes of five methods are displayed in Tables \ref{tab:pooled_results_by_age}, \ref{tab:female_results_by_age} and \ref{tab:male_results_by_age}, which focus on the pooled, female and male groups, separately. 
These results underscore the necessity of robust estimation approaches. 

The regression-imputation-based methods (Reg-NP, Reg-LR) exhibit severe instability, yielding extreme point estimates. 
This instability shows their sensitivity to the prediction errors arising from nuisance parameter estimations.
Furthermore, while the robust linear approach (R-LR) mitigates this sensitivity, it suffers from the linear assumptions and the risk of model misspecification.

On the contrary, by integrating the proposed Neyman-orthogonal score function with deep learning methods, the proposed R-NP method captures complex, non-linear underlying data structures.
Therefore, the proposed estimator exhibits robustness and provides theoretically valid evaluations of the true causal effects.

\end{itemize}
Overall, our findings advance the existing literature by offering both a robust methodological tool and novel empirical evidence.

\begin{figure}[H]
\centering   
\begin{subfigure}[b]{1\textwidth}
\centering
\caption{Pooled population} 
\includegraphics[width=8cm,height=3.45cm]{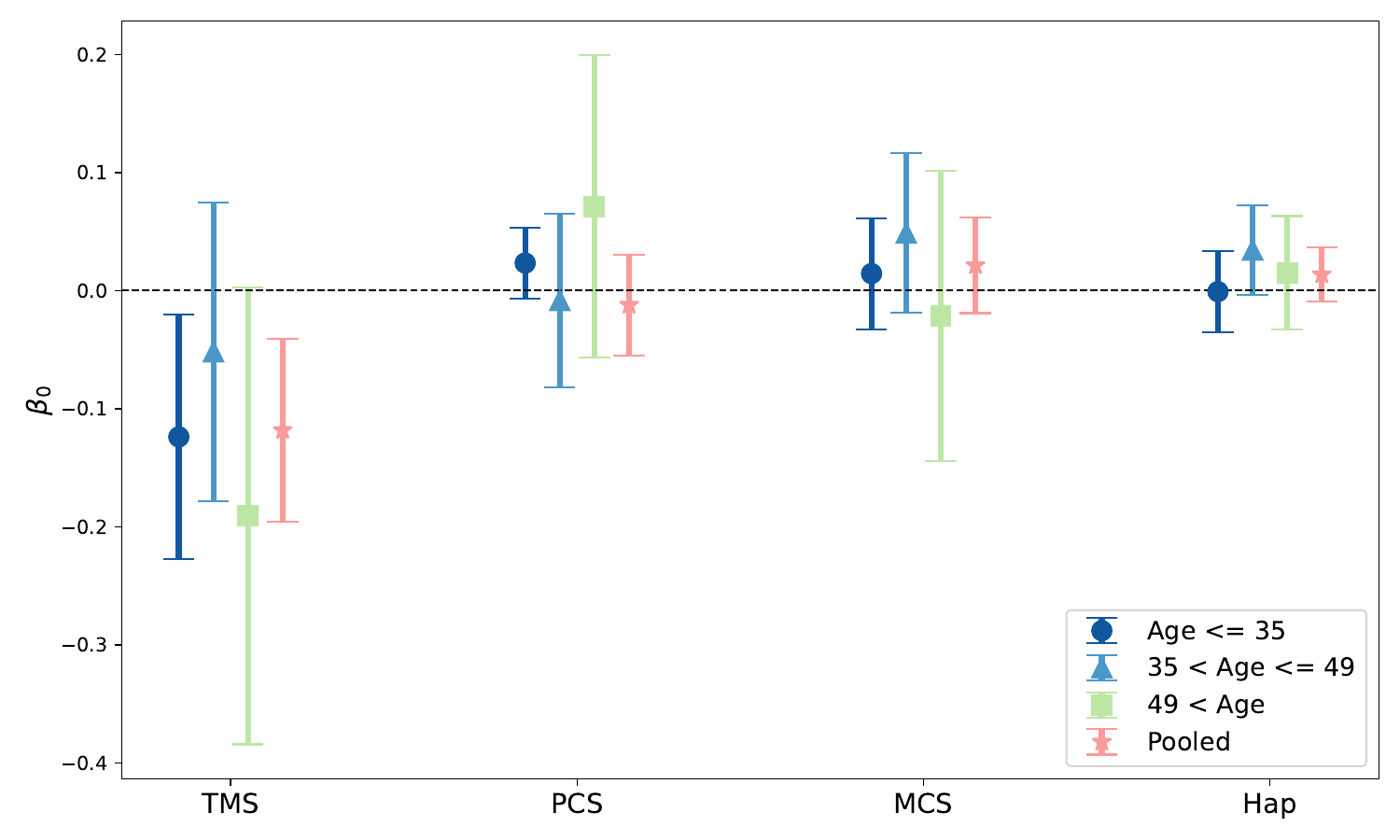}

\end{subfigure}

\vspace{1em} 

\begin{subfigure}[b]{1\textwidth}
\centering
\caption{Female population}  
\includegraphics[width=8cm,height=3.45cm]{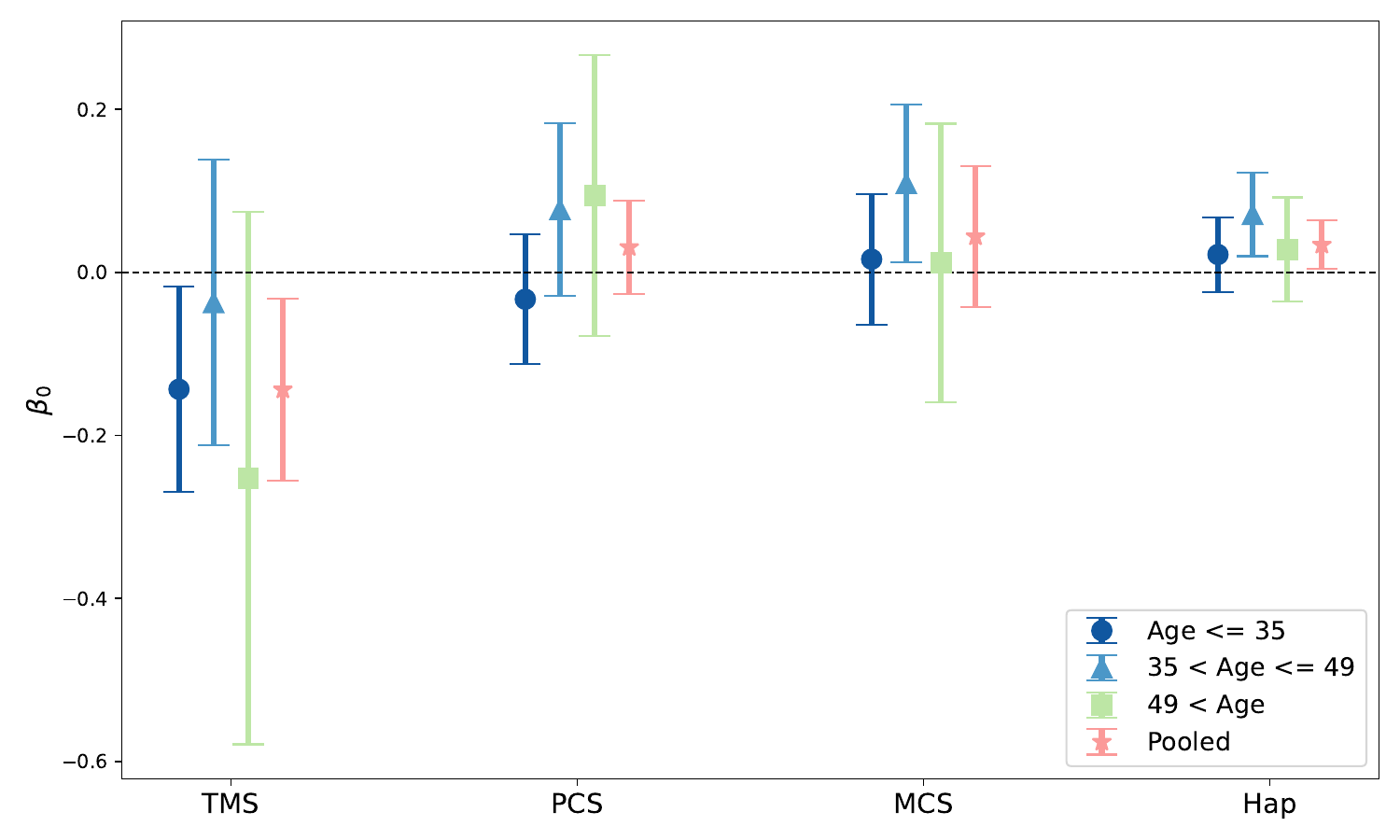}

\end{subfigure}

\vspace{1em}

\begin{subfigure}[b]{1\textwidth}
\centering
\caption{Male population} 
\includegraphics[width=8cm,height=3.45cm]{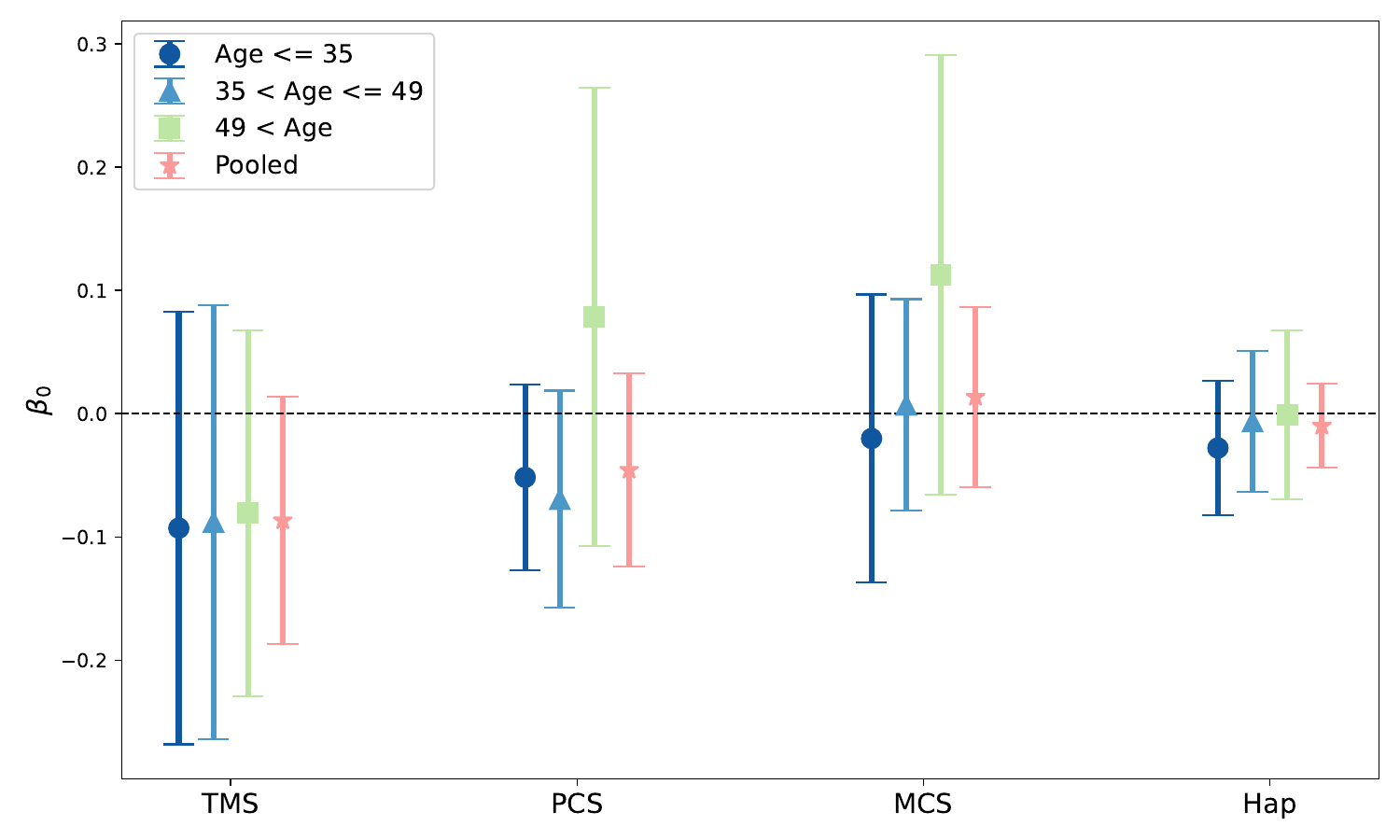}

\end{subfigure}

\caption{Proposed R-NP estimates and 95\% confidence intervals for the causal effects of Medicaid coverage on four outcomes in the OHIE dataset. 
Subfigure (a) presents results for the pooled population, while (b) and (c) focus on females and males, respectively. 
In each panel, colors represent subgroup estimates by age: indigo for individuals aged $\le$35, blue for ages 36–49, green for ages >49, and pink for the total group within that population.}
\label{fig1}
\end{figure}

\begin{table}[h]
\centering
\caption{Point estimates ($\hat{\beta}$) and 95\% confidence intervals (CI) for five estimation methods and the pooled population across different age groups.}
\resizebox{0.9\textwidth}{!}{ 
\begin{tabular}{@{}ll cccccccccc@{}}
\hline
& & \multicolumn{2}{c}{R-NP} & \multicolumn{2}{c}{R-LR} & \multicolumn{2}{c}{M} & \multicolumn{2}{c}{Reg-NP} & \multicolumn{2}{c}{Reg-LR} \\
\cline{3-4} \cline{5-6} \cline{7-8} \cline{9-10} \cline{11-12}
Age & Target & $\hat{\beta}$ & CI & $\hat{\beta}$ & CI & $\hat{\beta}$ & CI & $\hat{\beta}$ & CI & $\hat{\beta}$ & CI \\
\hline
All Ages & TMS & -0.118 & [-0.194, -0.042] & -0.065 & [-0.138, 0.008] & -0.105 & [-0.181, -0.029] & -0.099 & [-0.175, -0.023] & -0.799 & [-0.872, -0.726] \\
& PCS & -0.012 & [-0.053, 0.029] & 0.020 & [0.000, 0.040] & 0.042 & [0.001, 0.083] & -0.536 & [-0.577, -0.495] & -0.091 & [-0.111, -0.071] \\
& MCS & 0.021 & [-0.018, 0.060] & 0.025 & [0.000, 0.050] & 0.080 & [0.041, 0.119] & -0.169 & [-0.208, -0.130] & -0.122 & [-0.147, -0.097] \\
& Hap & 0.013 & [-0.009, 0.035] & -0.004 & [-0.024, 0.016] & 0.022 & [0.000, 0.044] & 0.047 & [0.025, 0.069] & -0.176 & [-0.196, -0.156] \\
\hline
$\text{Age}\le 35$ & TMS & -0.123 & [-0.225, -0.021] & -0.063 & [-0.159, 0.033] & -0.123 & [-0.225, -0.021] & -0.100 & [-0.202, 0.002] & -0.170 & [-0.266, -0.074] \\
& PCS & 0.023 & [-0.006, 0.052] & 0.016 & [-0.002, 0.034] & 0.007 & [-0.022, 0.036] & -7.564 & [-7.593, -7.535] & -0.012 & [-0.030, 0.006] \\
& MCS & 0.014 & [-0.033, 0.061] & 0.022 & [-0.021, 0.065] & 0.058 & [0.011, 0.105] & -1.990 & [-2.037, -1.943] & 0.049 & [0.006, 0.092] \\
& Hap & 0.000 & [-0.033, 0.033] & -0.008 & [-0.039, 0.023] & 0.006 & [-0.027, 0.039] & -0.028 & [-0.061, 0.005] & -0.095 & [-0.126, -0.064] \\
\hline
$35<\text{Age}\le 49$  & TMS & -0.051 & [-0.176, 0.074] & -0.043 & [-0.161, 0.075] & -0.042 & [-0.167, 0.083] & 0.002 & [-0.123, 0.127] & -0.462 & [-0.580, -0.344] \\
& PCS & -0.008 & [-0.081, 0.065] & 0.004 & [-0.029, 0.037] & 0.055 & [-0.018, 0.128] & -1.124 & [-1.197, -1.051] & -0.060 & [-0.093, -0.027] \\
& MCS & 0.048 & [-0.019, 0.115] & 0.037 & [-0.006, 0.080] & 0.111 & [0.044, 0.178] & 0.045 & [-0.022, 0.112] & 0.073 & [0.030, 0.116] \\
& Hap & 0.034 & [-0.003, 0.071] & 0.005 & [-0.026, 0.036] & 0.049 & [0.012, 0.086] & -0.011 & [-0.048, 0.026] & -0.170 & [-0.201, -0.139] \\
\hline
$49<\text{Age}$ & TMS & -0.190 & [-0.382, 0.002] & -0.120 & [-0.306, 0.066] & -0.180 & [-0.372, 0.012] & -0.023 & [-0.215, 0.169] & -0.003 & [-0.189, 0.183] \\
& PCS & 0.071 & [-0.056, 0.198] & 0.069 & [0.026, 0.112] & 0.111 & [-0.016, 0.238] & 0.175 & [0.048, 0.302] & 0.067 & [0.024, 0.110] \\
& MCS & -0.021 & [-0.143, 0.101] & 0.033 & [-0.022, 0.088] & 0.079 & [-0.043, 0.201] & 0.102 & [-0.020, 0.224] & 0.196 & [0.141, 0.251] \\
& Hap & 0.015 & [-0.032, 0.062] & -0.016 & [-0.055, 0.023] & 0.021 & [-0.026, 0.068] & 0.036 & [-0.011, 0.083] & 0.134 & [0.095, 0.173] \\
\hline
\end{tabular}
}
\label{tab:pooled_results_by_age}
\end{table}

\begin{table}[h]
\centering
\caption{Point estimates ($\hat{\beta}$) and 95\% confidence intervals (CI) for five estimation methods and the female population across different age groups.}
\resizebox{0.9\textwidth}{!}{ 
\begin{tabular}{@{}ll cccccccccc@{}}
\hline
& & \multicolumn{2}{c}{R-NP} & \multicolumn{2}{c}{R-LR} & \multicolumn{2}{c}{M} & \multicolumn{2}{c}{Reg-NP} & \multicolumn{2}{c}{Reg-LR} \\
\cline{3-4} \cline{5-6} \cline{7-8} \cline{9-10} \cline{11-12}
Age Group & Target & $\hat{\beta}$ & CI & $\hat{\beta}$ & CI & $\hat{\beta}$ & CI & $\hat{\beta}$ & CI & $\hat{\beta}$ & CI \\
\hline
All Ages & TMS & -0.143 & [-0.253, -0.033] & -0.099 & [-0.205, 0.007] & -0.124 & [-0.234, -0.014] & -0.139 & [-0.249, -0.029] & 0.028 & [-0.078, 0.134] \\
& PCS & 0.030 & [-0.027, 0.087] & 0.027 & [0.002, 0.052] & 0.052 & [-0.005, 0.109] & -0.234 & [-0.291, -0.177] & 0.026 & [0.001, 0.051] \\
& MCS & 0.044 & [-0.040, 0.128] & 0.029 & [-0.008, 0.066] & 0.071 & [-0.013, 0.155] & -0.446 & [-0.530, -0.362] & 0.108 & [0.071, 0.145] \\
& Hap & 0.033 & [0.004, 0.062] & 0.022 & [-0.003, 0.047] & 0.038 & [0.009, 0.067] & 0.039 & [0.010, 0.068] & 0.003 & [-0.022, 0.028] \\
\hline
$\text{Age}\le 35$ & TMS & -0.143 & [-0.268, -0.018] & -0.128 & [-0.246, -0.010] & -0.147 & [-0.272, -0.022] & -0.130 & [-0.255, -0.005] & -0.428 & [-0.546, -0.310] \\
& PCS & -0.033 & [-0.111, 0.045] & 0.010 & [-0.029, 0.049] & 0.006 & [-0.072, 0.084] & 0.661 & [0.583, 0.739] & -0.043 & [-0.082, -0.004] \\
& MCS & 0.016 & [-0.062, 0.094] & 0.031 & [-0.026, 0.088] & 0.062 & [-0.016, 0.140] & -0.008 & [-0.086, 0.070] & 0.110 & [0.053, 0.167] \\
& Hap & 0.021 & [-0.024, 0.066] & 0.013 & [-0.028, 0.054] & 0.021 & [-0.024, 0.066] & -0.013 & [-0.058, 0.032] & 0.156 & [0.115, 0.197] \\
\hline
$35<\text{Age}\le 49$  & TMS & -0.037 & [-0.211, 0.137] & -0.025 & [-0.190, 0.140] & -0.024 & [-0.198, 0.150] & 0.109 & [-0.065, 0.283] & 0.295 & [0.130, 0.460] \\
& PCS & 0.077 & [-0.027, 0.181] & 0.024 & [-0.021, 0.069] & 0.099 & [-0.005, 0.203] & 0.246 & [0.142, 0.350] & 0.040 & [-0.005, 0.085] \\
& MCS & 0.109 & [0.013, 0.205] & 0.074 & [0.013, 0.135] & 0.131 & [0.035, 0.227] & 0.229 & [0.133, 0.325] & 0.206 & [0.145, 0.267] \\
& Hap & 0.071 & [0.020, 0.122] & 0.045 & [0.002, 0.088] & 0.077 & [0.026, 0.128] & 0.076 & [0.025, 0.127] & 0.045 & [0.002, 0.088] \\
\hline
$49<\text{Age}$ & TMS & -0.252 & [-0.577, 0.073] & -0.188 & [-0.496, 0.120] & -0.263 & [-0.588, 0.062] & -0.140 & [-0.465, 0.185] & -0.987 & [-1.295, -0.679] \\
& PCS & 0.093 & [-0.077, 0.263] & 0.060 & [0.003, 0.117] & 0.105 & [-0.065, 0.275] & 0.251 & [0.081, 0.421] & -0.001 & [-0.058, 0.056] \\
& MCS & 0.011 & [-0.160, 0.182] & -0.021 & [-0.101, 0.059] & 0.015 & [-0.156, 0.186] & 0.143 & [-0.028, 0.314] & 0.081 & [0.001, 0.161] \\
& Hap & 0.028 & [-0.035, 0.091] & 0.008 & [-0.045, 0.061] & 0.028 & [-0.035, 0.091] & 0.063 & [0.000, 0.126] & 0.106 & [0.053, 0.159] \\
\hline
\end{tabular}
}
\label{tab:female_results_by_age}
\end{table}

\begin{table}[h]
\centering
\caption{Point estimates ($\hat{\beta}$) and 95\% confidence intervals (CI) for five estimation methods and the male population across different age groups.}
\resizebox{0.9\textwidth}{!}{ 
\begin{tabular}{@{}ll cccccccccc@{}}
\hline
& & \multicolumn{2}{c}{R-NP} & \multicolumn{2}{c}{R-LR} & \multicolumn{2}{c}{M} & \multicolumn{2}{c}{Reg-NP} & \multicolumn{2}{c}{Reg-LR} \\
\cline{3-4} \cline{5-6} \cline{7-8} \cline{9-10} \cline{11-12}
Age Group & Target & $\hat{\beta}$ & CI & $\hat{\beta}$ & CI & $\hat{\beta}$ & CI & $\hat{\beta}$ & CI & $\hat{\beta}$ & CI \\
\hline
All Ages & TMS & -0.086 & [-0.186, 0.014] & -0.017 & [-0.109, 0.075] & -0.074 & [-0.174, 0.026] & -0.034 & [-0.134, 0.066] & 0.057 & [-0.035, 0.149] \\
& PCS & -0.045 & [-0.123, 0.033] & 0.009 & [-0.020, 0.038] & 0.026 & [-0.052, 0.104] & -21.792 & [-21.870, -21.714] & 0.012 & [-0.017, 0.041] \\
& MCS & 0.013 & [-0.060, 0.086] & 0.024 & [-0.015, 0.063] & 0.088 & [0.015, 0.161] & -0.742 & [-0.815, -0.669] & 0.108 & [0.069, 0.147] \\
& Hap & -0.009 & [-0.042, 0.024] & -0.039 & [-0.068, -0.010] & 0.001 & [-0.032, 0.034] & 0.010 & [-0.023, 0.043] & -0.008 & [-0.037, 0.021] \\
\hline
$\text{Age}\le 35$ & TMS & -0.092 & [-0.266, 0.082] & -0.012 & [-0.181, 0.157] & -0.087 & [-0.261, 0.087] & -0.068 & [-0.242, 0.106] & -0.357 & [-0.526, -0.188] \\
& PCS & -0.051 & [-0.125, 0.023] & -0.001 & [-0.050, 0.048] & 0.044 & [-0.030, 0.118] & -1.471 & [-1.545, -1.397] & 0.635 & [0.586, 0.684] \\
& MCS & -0.020 & [-0.136, 0.096] & -0.005 & [-0.070, 0.060] & 0.003 & [-0.113, 0.119] & 0.080 & [-0.036, 0.196] & -0.930 & [-0.995, -0.865] \\
& Hap & -0.027 & [-0.080, 0.026] & -0.047 & [-0.096, 0.002] & -0.014 & [-0.067, 0.039] & 0.004 & [-0.049, 0.057] & -0.106 & [-0.155, -0.057] \\
\hline
$35<\text{Age}\le 49$  & TMS & -0.087 & [-0.261, 0.087] & -0.096 & [-0.272, 0.080] & -0.071 & [-0.245, 0.103] & 0.002 & [-0.172, 0.176] & 0.853 & [0.677, 1.029] \\
& PCS & -0.069 & [-0.155, 0.017] & -0.010 & [-0.061, 0.041] & 0.005 & [-0.081, 0.091] & 0.039 & [-0.047, 0.125] & 0.014 & [-0.037, 0.065] \\
& MCS & 0.007 & [-0.077, 0.091] & 0.023 & [-0.042, 0.088] & 0.091 & [0.007, 0.175] & 0.345 & [0.261, 0.429] & 0.024 & [-0.041, 0.089] \\
& Hap & -0.006 & [-0.063, 0.051] & -0.029 & [-0.078, 0.020] & 0.015 & [-0.042, 0.072] & 0.020 & [-0.037, 0.077] & 0.063 & [0.014, 0.112] \\
\hline
$49<\text{Age}$ & TMS & -0.080 & [-0.227, 0.067] & -0.027 & [-0.162, 0.108] & -0.069 & [-0.216, 0.078] & -0.046 & [-0.193, 0.101] & -0.312 & [-0.447, -0.177] \\
& PCS & 0.078 & [-0.108, 0.264] & 0.053 & [-0.014, 0.120] & 0.114 & [-0.072, 0.300] & 0.106 & [-0.080, 0.292] & 0.138 & [0.071, 0.205] \\
& MCS & 0.112 & [-0.064, 0.288] & 0.056 & [-0.026, 0.138] & 0.150 & [-0.026, 0.326] & 0.349 & [0.173, 0.525] & 0.798 & [0.716, 0.880] \\
& Hap & 0.000 & [-0.069, 0.069] & -0.054 & [-0.113, 0.005] & 0.011 & [-0.058, 0.080] & 0.005 & [-0.064, 0.074] & 0.085 & [0.026, 0.144] \\
\hline
\end{tabular}
}
\label{tab:male_results_by_age}
\end{table}

\section{Conclusion}\label{Sec6}
We have proposed a novel way to construct Neyman-orthogonal score functions for robust semiparametric estimation and inference with many infinite-dimensional nuisance parameters and developed a general approach for analyzing multi-stage nonparametric estimation procedures.

We apply the proposed approach to a specific case: estimating causal effects in a population using a binary instrumental variable and covariates.
Extending the Abadie's estimator, we construct a novel robust estimator for the parameter of interest via a Neyman-orthogonal score function.
This orthogonality desensitizes the target parameter to the estimation errors of the nuisance functions, and ensures that the target parameter estimator remains asymptotically normal, provided the nuisance function estimators achieve $o_p(n^{-1/4})$-consistency. 
Therefore, this theoretical guarantee allows us to safely employ deep neural networks for nonparametric estimation tasks, thereby mitigating the risks of model misspecification and data heterogeneity.  
Both the theoretical and experimental results show the significant potential of the proposed approach in addressing complex causal inference challenges.
\begin{appendix}

\section{Notation}
For any measurable function $f$, the $L^p(P)$-norm is given by $\|f\|_{p}^p:=\int |f(w)|^p dP(w)$ for $p\in[1,\infty)$,
with its empirical counterpart $L^p(\mathbb{P}_{n})$-norm defined as $\|f\|^p_{n,p}:=\mathbb{P}_{n}|f(W)|^p$ with a sequence $w_1^n =(w_1,\dots,w_n)$.
For a vector $v$, we denote $\|v\|_p$ and $\|v\|_\infty$ to be $p$-norm and $\infty$-norm, respectively.
Then, for given positive number $\epsilon$, let $\mathcal{N}(\epsilon,\mathcal{H},\|\cdot\|)$ and $\mathcal{N}_{[]}(\epsilon,\mathcal{H},\|\cdot\|)$  be the $\epsilon$-covering and $\epsilon$-bracketing number of a function class $\mathcal{H}$ with respect to the norm $\|\cdot\|$, respectively.

Denote $a_n \lesssim b_n$ as $a_n \le cb_n$ for some constant $c>0$.

For any multivariate function $\psi(\beta, f, h; w)$, we use $\partial_{f} \psi(\beta, f, h; w)$ to denote the first-order partial derivative of $\psi$ with respect to its second argument; second-order and mixed partial derivatives are defined analogously.

Let $l^{\infty}(\Theta)$ be the space of bounded functionals on $\Theta$ under the supremum norm.
For $h_1\in\Theta$ with support $\mathcal{X}$ satisfying $\sup\limits_{x\in\mathcal{X}}|h_1(x)|:=\|h_1\|_{\infty}\leq1$, the sample derivative map $Q_n$ and its population version $Q$ in direction $h_1$ at $(\mu,f)$ are
\begin{align*}
Q_{n_2}(\mu,f)[h_1]=&\lim_{\delta\rightarrow0}\mathbb{P}_{n_2}\frac{m_2(\mu+\delta h_1,f;W)-m_2(\mu,f;W)}{\delta}\\
:= & \partial_{\mu}\mathbb{P}_{n_2} m_2(\mu,f;W)[h_1]= \mathbb{P}_{n_2}\{ \partial_{\mu}m_2(\mu,f;W) \cdot h_1 \},
\\
Q(\mu,f)[h_1]:=&\  \partial_{\mu} \mathbb P m_2(\mu,f;W)[h_1] =   \mathbb P \partial_{\mu} \{m_2(\mu,f;W) \cdot h_1\}.
\end{align*}
Furthermore, for $h_2\in\Theta$ and $h_3\in \mathcal{F}$, we define
\begin{align*} 
\partial_{\mu} Q (\mu,f)[h_1][h_2]=&\lim_{\delta\rightarrow 0}\frac{Q(\mu+\delta h_2,f)[h_1]-Q(\mu,f)[h_1]}{\delta}\\
:=&\partial^2_{\mu \mu} \mathbb P m_2(\mu,f;W)[h_1][h_2] = \mathbb P \{\partial^2_{\mu \mu} m_2(\mu,f;W) \cdot h_1 \cdot h_2 \},\\
\partial_{f} Q(\mu,f)[h_1][h_3]=&\lim_{\delta\rightarrow 0}\frac{Q(\mu,f+\delta h_3)[h_1]-Q(\mu,f)[h_1]}{\delta} \\
:=&\partial^2_{\mu f} \mathbb P m_2(\mu,f;W)[h_1][h_3]= \mathbb P \{\partial^2_{\mu f}  m_2(\mu,f;W) \cdot h_1 \cdot h_3 \}.
\end{align*}

\section{Poofs of Theorem \ref{NO_score}}

\begin{proof}
(i). 
Recall that 
\begin{equation}
h_0(x) = -\frac{E[\partial^2_{\beta f} m(\beta_0, f_0;W)\mid X=x]}{E[\partial^2_{f f} m(\beta_0, f_0;W)\mid X=x]}. \nonumber
\end{equation}
For any $f\in \mathcal{F}$,
\begin{align*}
&\partial_f \mathbb{P}\psi^*(\beta_0, f_0,h_0; W)[f-f_0] \\
= &\partial_f \mathbb{P}[\partial_{\beta} m(\beta_0,f_0;W) +   \partial_{f} m(\beta_0,f_0;W)h_0(X)] [f-f_0]\\
=& \mathbb{P} [\partial^2_{\beta f} m(\beta_0,f_0;W) \{f(X)-f_0(X)\}] + \mathbb{P}[ \partial^2_{ff} m(\beta_0,f_0;W) h_0(X) \{f(X)-f_0(X)\}] \\
=& \mathbb{P} [\partial^2_{\beta f} m(\beta_0,f_0;W) \{f(X)-f_0(X)\} ]\\
&- \mathbb{P}\bigg[ \partial^2_{f f} m(\beta_0,f_0;W)  \frac{E[\partial^2_{\beta f} m(\beta_0, f_0;W)\mid X ]}{E[\partial^2_{f f} m(\beta_0, f_0;W)\mid X ]}\{f(X)-f_0(X)\}\bigg] \\ 
=& \mathbb{P} [\partial^2_{\beta f} m(\beta_0,f_0;W) \{f(X)-f_0(X)\} ]\\
&- \mathbb{P}\bigg[ E [\partial^2_{f f} m(\beta_0,f_0;W) \mid X] \frac{E[\partial^2_{\beta f} m(\beta_0, f_0;W)\mid X ]}{E[\partial^2_{f f} m(\beta_0, f_0;W)\mid X ]}\{ f(X)-f_0(X)\} \bigg] \\ 
=& 0,
\end{align*}
where the fourth equality follows from the law of total expectation.
Then, for any $ h\in \mathcal{F} $, we have
\begin{align*}
&\partial_{h} \mathbb{P} \psi^*(\beta_0,f_0,h_0; W)[h-h_0]
=  \mathbb{P}[\partial_{f} m(\beta_0,f_0;W) \{h(X)-h_0(X)\}]  = 0,
\end{align*}
which follows from the first-order condition of $m$ with respect to $f_0$.

(ii). 
For $\hat{\beta}_{n_2}$, by using Taylor expansion, we have
\begin{align}\label{Normal}
&-\sqrt{n_2}(\hat{\beta}_{n_2}  - \beta_0)\partial_{\beta} \mathbb{P}  \psi^*(\beta_0,f_0,h_0;W)  \nonumber \\
= &\sqrt{n_2} (\mathbb{P}_{n_2}-\mathbb{P})  \{\psi^*(\hat{\beta}_{n_2},\hat{f}_{n_1} ,\hat{h}_{n_1};W) - \psi^*(\beta_0,f_0,h_0;W) \} \nonumber\\
&  + \sqrt{n_2} \partial_{f} \mathbb{P} \psi^*(\beta_0,f_0,h_0;W)[\hat{f}_{n_1} -f_0] + \sqrt{n_2} \partial_{h} \mathbb{P} \psi^*(\beta_0,f_0,h_0;W)[\hat{h}_{n_1} -h_0] \nonumber \\
& + \sqrt{n_2} (\mathbb{P}_{n_2} - \mathbb{P}) \psi^*(\beta_0,f_0,h_0;W) + O_p(\|\hat{f}_{n_1} -f_0\|_2^2 + \|\hat{h}_{n_1} -h_0\|_2^2 + |\hat{\beta}_{n_2}  - \beta_0|^2),
\end{align}
where $o_p(n^{-1/4})$-consistent estimators $\hat{f}_{n_1}$ and $\hat{h}_{n_1}$ developed in the Step II of Algorithm \ref{alg:cross_fitting} approximate $f_0$ and $h_0$, respectively.
Let $\mathbb{P}_{n_2}$ denote the empirical measure based on the subsample $S_2$.
Consider the following decomposition,
\begin{align*}
&\sqrt{n_2} (\mathbb{P}_{n_2}-\mathbb{P}) \{\psi^*(\hat{\beta}_{n_2},\hat{f}_{n_1} ,\hat{h}_{n_1};W) - \psi^*(\beta_0,f_0,h_0;W) \} \\
=& \underbrace{\sqrt{n_2} (\mathbb{P}_{n_2}-\mathbb{P})  \{\psi^*(\beta_0,\hat{f}_{n_1} ,\hat{h}_{n_1};W) - \psi^*(\beta_0,f_0,h_0;W)\} }_{A}  \\
& + \underbrace{\sqrt{n_2} (\mathbb{P}_{n_2}-\mathbb{P})  \{\psi^*(\hat{\beta}_{n_2} ,\hat{f}_{n_1} ,\hat{h}_{n_1};W) - \psi^*(\beta_0,\hat{f}_{n_1} ,\hat{h}_{n_1};W)\} }_{B}.
\end{align*} 
Next, we show that $A=o_p(1)$ and $B=o_p(1)$.

First, due to sample splitting, the random sequence 

$\{\mathcal{E}_i :=\psi^*(\beta_0,\hat{f}_{n_1} ,\hat{h}_{n_1};W_i) - \psi^*(\beta_0,f_0,h_0;W_i)\}_{i=1}^{n_2}$ is i.i.d. given $S_1$.  
Hence, we have
\begin{align}\label{1.6}
& E \{ |A|^2 \mid S_1\} 
=E \left\{ \left[ \frac{1}{\sqrt{n_2}} \sum_{i=1}^{n_2}\{ \mathcal{E}_i - E[\mathcal{E}_i \mid S_1] \}\right]^2 \;\middle|\; S_1 \right\} \nonumber \\
=& \frac{1}{n_2}\sum_{i=1}^{n_2} E \left\{ (\mathcal{E}_i - E[ \mathcal{E}_i\mid S_1])^2 \mid S_1 \right\} 
= E  [|\mathcal{E}_i|^2 \mid S_1]  - |E [  \mathcal{E}_i \mid S_1] |^2 \nonumber \\
\leq & E \{ |\psi^*(\beta_0,\hat{f}_{n_1} ,\hat{h}_{n_1};W) - \psi^*(\beta_0,f_0,h_0;W)|^2 \mid S_1\},
\end{align}
where the second equality follows from the conditional independence between $\mathcal{E}_i - E[\mathcal{E}_i \mid S_1]$ and $\mathcal{E}_j - E[\mathcal{E}_j \mid S_1]$ given $S_1$ for $i \neq j$.
Applying Markov's inequality to the conditional distribution, we have
\begin{align*}
P ( | A  |>\epsilon  | S_1 ) \leq &	\frac{  E( |A|^2 \mid S_1) }{\epsilon^2} 
\lesssim \frac{  E ( |\hat{f}_{n_1}-f_0 |^2 + |\hat{h}_{n_1}-h_0 |^2 \mid S_1 ) }{\epsilon^2}.
\end{align*}
The second inequality follows from the smoothness of $\psi^*$ and the uniform boundedness of $\hat{\theta}_{n_1}$ and $\theta_0$, which imply the existence of a positive constant $C$ such that
\begin{align*}
E (|\mathcal{E}_i|^2 \mid S_1)  =&  E \{|\psi^*(\beta_0,\hat{f}_{n_1} ,\hat{h}_{n_1}  ;W_i) - \psi^*(\beta_0,f_0,h_0;W_i)|^2 \mid S_1 \} \\
\leq & C E ( |\hat{f}_{n_1}-f_0 |^2 + |\hat{h}_{n_1}-h_0 |^2 \mid S_1  ).
\end{align*}
Since $\hat{\theta}_{n_1} $ is a $o_p(n^{-1/4})$-consistent estimator of $\theta_0$, by the properties of conditional probability, we conclude that $A = o_p(1)$. 

Then, as $\hat{\beta}_{n_2}$ is $1$-dimensional and $o_p(n^{-1/4})$-consistent, applying Lemma 3.4.2 of \cite{vaart1996weak} yields that for $\delta_n = o(n^{-1/4})$,
\begin{align*}
E\bigg[\sup_{ |\beta - \beta_0| \leq \delta_n} |\sqrt{n_2}(\mathbb{P}_{n_2}-\mathbb{P}) [\psi^*(\beta, \hat{f}_{n_1} ,\hat{h}_{n_1};W) - \psi^*(\beta_0, \hat{f}_{n_1} ,\hat{h}_{n_1} ;W)]| \bigg| S_1 \bigg] \lesssim o_p(1).
\end{align*}
This implies that $B=o_p(1)$.

Therefore, combining \eqref{Normal} with the Neyman-orthogonal condition and $\sqrt{n_2} (\mathbb{P}_{n_2}-\mathbb{P}) [\psi^*(\hat{\beta}_{n_2},\hat{f}_{n_1} ,\hat{h}_{n_1};W) - \psi^*(\beta_0,f_0,h_0;W)]=A+B=o_p(1)$, we obtain that for $\hat{\beta}_{n_2}$, 
\begin{align*}
-\sqrt{n_2}(\hat{\beta}_{n_2}  - \beta_0)\partial_{\beta} \mathbb{P}\psi^*(\beta_0, f_0,h_0;W)  \nonumber
= \sqrt{n_2} (\mathbb{P}_{n_2} - \mathbb{P}) \psi^*(\beta_0,f_0,h_0;W) + o_p(1).
\end{align*}

Similar to the above arguments, exchanging roles of $S_1$ and $S_2$, we have
\begin{align*}
-\sqrt{n_1}(\hat{\beta}_{n_1}  - \beta_0)\partial_{\beta} \mathbb{P}\psi^*(\beta_0,f_0,h_0;W)  \nonumber
= \sqrt{n_1} (\mathbb{P}_{n_1} - \mathbb{P}) \psi^*(\beta_0, f_0,h_0;W) + o_p(1).
\end{align*}
Therefore, if $n_1 = n_2 = n/2$, then $\hat{\beta}_n^R = (\hat{\beta}_{n_1} + \hat{\beta}_{n_2}) /2$ satisfies
\begin{align*}
-\sqrt{n}(\hat{\beta}_{n}^R  - \beta_0)\partial_{\beta} \mathbb{P}\psi^*(\beta_0,f_0,h_0;W)  \nonumber
= \sqrt{n} (\mathbb{P}_{n} - \mathbb{P}) \psi^*(\beta_0,f_0,h_0;W) + o_p(1).
\end{align*} 
\end{proof} 	 

\section{\textbf{Proofs of Propositions \ref{prop0} and \ref{prop1}}}
\subsection{Proof of Proposition \ref{prop0}}
\begin{proof}
Under Assumption \ref{as_pre}(a) and (b), we have for any direction $h$,
\begin{align}\label{eq1}
&-\sqrt{n}[\mathbb{P}\{ \partial_{\beta} m(\hat{\beta}_n^M, \hat{f}_n^M; W) + \partial_{f} m(\hat{\beta}_n^M, \hat{f}_n^M; W)h \}\\ &-\mathbb{P} \{ \partial_{\beta} m(\beta_0, f_0; W) + \partial_{f} m(\beta_0, f_0; W) h  \}] \nonumber  \\
=&  \sqrt{n}(\mathbb{P}_n-\mathbb{P}) \left\{ \partial_{\beta} m(\beta_0, f_0; W) + \partial_{f} m(\beta_0, f_0; W)h \right\} + o_p(1).
\end{align}
Then, by Assumption \ref{as_pre}(c) and (d), we get that
\begin{align}\label{eq2}
&-\sqrt{n}[\mathbb{P}\{ \partial_{\beta} m(\hat{\beta}_n^M, \hat{f}_n^M; W) + \partial_{f} m(\hat{\beta}_n^M, \hat{f}_n^M; W)h \} \nonumber \\ 
&-\mathbb{P} \{ \partial_{\beta} m(\beta_0, f_0; W) + \partial_{f} m(\beta_0, f_0; W) h  \}] \nonumber \\
=&-\sqrt{n}\mathbb{P}[ \{ \partial^2_{\beta \beta} m(\beta_0, f_0; W) + \partial^2_{f \beta} m(\beta_0, f_0; W) h  \} (\hat{\beta}_n^M - \beta_0) ] \nonumber\\
& -\sqrt{n} \mathbb{P} \{ \partial^2_{\beta f} m(\beta_0, f_0; W) (\hat{f}^M_n-f_0) + \partial^2_{f f} m(\beta_0, f_0; W) h (\hat{f}^M_n-f_0) \}   + o_p(1).
\end{align}
Therefore, combining \eqref{eq1} and \eqref{eq2} completes the proof.
\end{proof}

\subsection{Proof of Proposition \ref{prop1}}
\begin{assumption}[Regularity conditions]\label{2S_norm}
Suppose that for any $h \in\Theta$, the following additional conditions hold:
\begin{itemize}
\item[(B1)] (Stochastic equicontinuity)
$$\sqrt{n_2}(Q_{n_2}-Q)(\hat{\mu}_{n_2}, \hat{f}_{n_1})[h] - \sqrt{n_2}(Q_{n_2}-Q)(\mu_0, f_0)[h] = o_p(1).$$
\item[(B2)] (Score condition)
\begin{enumerate}
\item[(i)] $Q(\mu_0, f_0)[h] = 0$.
\item[(ii)] $Q_{n_2}(\hat{\mu}_{n_2}, \hat{f}_{n_1})[h] = o_p(n^{-1/2})$.
\end{enumerate}
\item[(B3)] (Smoothness)The map $(\mu, f) \mapsto Q(\mu, f)[h]$ is Fréchet-differentiable at $(\mu_0, f_0)$.
\item[(B4)] (Second-order condition) 
\begin{align*}
&Q(\hat{\mu}_{n_2}, \hat{f}_{n_1})[h] - Q(\mu_0, f_0)[h] - \partial_{\mu} Q(\mu_0, f_0)[h][\hat{\mu}_{n_2}-\mu_0] \\
&- \partial_{f} Q(\mu_0, f_0)[h][\hat{f}_{n_1}-f_0] = o_p(n^{-1/2}).
\end{align*} 
\end{itemize}
\end{assumption}

\begin{proof}
By the decomposition of the score condition, we have
\begin{align}\label{eq0.0}
\sqrt{n_2} Q_{n_2}(\hat{\mu}_{n_2}, \hat{f}_{n_1})[h] = \sqrt{n_2} (Q_{n_2}-Q)(\hat{\mu}_{n_2}, \hat{f}_{n_1})[h] + \sqrt{n_2} Q(\hat{\mu}_{n_2}, \hat{f}_{n_1})[h].
\end{align}
Under the score condition (b)(ii), the left-hand side of \eqref{eq0.0} is $o_p(1)$.
For the first term on the right-hand side of \eqref{eq0.0}, the stochastic equicontinuity condition (a) implies that $$\sqrt{n_2} (Q_{n_2}-Q)(\hat{\mu}_{n_2}, \hat{f}_{n_1})[h] = \sqrt{n_2} (Q_{n_2}-Q)(\mu_0, f_0)[h] + o_p(1).$$
For the second term, using the score condition (b)(i), the smoothness condition (c), and the second-order condition (d), we get  \begin{align}\label{eq0.1} 
&\sqrt{n_2} Q(\hat{\mu}_{n_2}, \hat{f}_{n_1})[h] =\sqrt{n_2} \big( Q(\hat{\mu}_{n_2}, \hat{f}_{n_1})[h] - Q(\mu_0, f_0)[h] \big)\nonumber \\
=& \sqrt{n_2} \partial_{\mu} Q(\mu_0, f_0)[h][\hat{\mu}_{n_2}-\mu_0] + \sqrt{n_2} \partial_{f} Q(\mu_0, f_0)[h][\hat{f}_{n_1}-f_0] + o_p(1). 
\end{align}
Combining \eqref{eq0.0} and \eqref{eq0.1}, we have
\begin{align*} 
-\sqrt{n_2} \partial_{\mu} Q(\mu_0, f_0)[h][\hat{\mu}_{n_2}-\mu_0] =& \sqrt{n_2} \partial_{f} Q(\mu_0, f_0)[h][\hat{f}_{n_1}-f_0] \\ &+ \sqrt{n_2} (Q_{n_2}-Q)(\mu_0, f_0)[h] + o_p(1),
\end{align*}
which completes the proof.
\end{proof}
\end{appendix}

\bibliographystyle{unsrt}  
\bibliography{ref.bib}
\end{document}